\newcommand{\setof}[1]{\left\{{#1}\right\}}
\DeclareRobustCommand{\cev}[1]{%
	{\mathpalette\do@cev{#1}}%
}
\newcommand{\do@cev}[2]{%
	\vbox{\offinterlineskip
		\sbox\z@{$\m@th#1 x$}%
		\ialign{##\cr
			\hidewidth\reflectbox{$\m@th#1\vec{}\mkern4mu$}\hidewidth\cr
			\noalign{\kern-\ht\z@}
			$\m@th#1#2$\cr
		}%
	}%
}
\algnewcommand\algorithmicforeach{\textbf{for each}}
\newcommand{\Tbb}{\mathbb{T}}
\newcommand{\Rbb}{\mathbb{R}}
\newcommand{\Nbb}{\mathbb{N}}
\newcommand{\Scal}{\mathcal{S}}
\newcommand{\Jcal}{\mathcal{J}}
\newtheorem{thm}{Theorem}
\newtheorem{lemma}[thm]{Lemma} 
\newtheorem{obs}[thm]{Observation}
\theoremstyle{definition}
\newtheorem{defn}[thm]{Definition}
\theoremstyle{remark}
\newtheorem{exmpl}[thm]{Example}
\newcommand{\Jbb}{\mathbb{J}}
\newcommand{\Cbb}{\mathbb{C}}
\newcommand{\mJbb}{{m_{\Jbb}}}
\newcommand{\mTbb}{{m_{\Tbb}}}
\newcommand{\sbar}{\bar{s}}
\newcommand{\fbar}{\bar{f}}
\newcommand{\Cbar}{\bar{C}}
\newcommand{\Sbar}{\bar{S}}
\newcommand{\Scalbar}{\bar{\Scal}}
\newcommand{\Wbar}{\bar{W}}
\newcommand{\ex}{\mathit{ex}}
\newcommand{\exbar}{\bar{\mathit{ex}}}
\newcommand{\Ex}{\mathit{Ex}}
\newcommand{\Rel}{\mathit{Rel}}
\newcommand{\lpi}{<_\pi}
\newcommand{\gpi}{>_\pi}
\newcommand{\rbar}{\bar{r}}
\newcommand{\enf}{\mathit{enf}}
\newcommand{\Seg}{\mathit{Seg}}
\title{Eliminating Timing Anomalies in Scheduling Periodic Segmented Self-Suspending Tasks with Release Jitter}
\author{
	\IEEEauthorblockN{%
		Ching-Chi Lin\IEEEauthorrefmark{1},
		Mario G\"{u}nzel\IEEEauthorrefmark{1},
		Junjie Shi\IEEEauthorrefmark{1},
		Tristan Taylan Seidl\IEEEauthorrefmark{1},
		Kuan-Hsun Chen\IEEEauthorrefmark{3},
		and Jian-Jia Chen\IEEEauthorrefmark{1}\IEEEauthorrefmark{2}
	}
	\IEEEauthorblockA{%
		\IEEEauthorrefmark{1} Technical University of Dortmund, Dortmund, Germany\\
		\IEEEauthorrefmark{2} Lamarr Institute for Machine Learning and Artificial Intelligence, Dortmund, Germany\\
		Email: \{chingchi.lin, mario.guenzel, junjie.shi, tristan.seidl, jian-jia.chen\}@tu-dortmund.de\\
		\IEEEauthorrefmark{3} University of Twente, Enschede, Netherlands\\
		Email: k.h.chen@utwente.nl
	}
}
\begin{document}

\maketitle

\begin{abstract}
	Ensuring timing guarantees for every individual tasks is critical in real-time systems.
	Even for periodic tasks, providing timing guarantees for tasks with segmented self-suspending behavior is challenging due to timing anomalies, i.e., the reduction of execution or suspension time of some jobs increases the response time of another job.
	The release jitter of tasks can add further complexity to the situation, affecting the predictability and timing guarantees of real-time systems.
	The existing worst-case response time analyses for sporadic self-suspending tasks are only over-approximations and lead to overly pessimistic results.
	In this work, we address timing anomalies without compromising the worst-case response time (WCRT) analysis when scheduling periodic segmented self-suspending tasks with release jitter.
	We propose two treatments: \emph{segment release time enforcement} and \emph{segment priority modification}, and prove their effectiveness in eliminating timing anomalies. 
	Our evaluation demonstrates that the proposed treatments achieve higher acceptance ratios in terms of schedulability compared to state-of-the-art scheduling algorithms.
	Additionally, we implement the segment-level fixed-priority scheduling mechanism on RTEMS and verify the validity of our \emph{segment priority modification} treatment.
	This work expands our previous conference publication at the 29th IEEE Real-Time and Embedded Technology and Applications Symposium (RTAS'23), which considers only periodic segmented self-suspending tasks without release jitter.
\end{abstract}
\begin{IEEEkeywords}
	real-time systems; segmented self-suspending task; segment-level fixed-priority scheduling; timing guarantee; timing anomaly; release jitter
\end{IEEEkeywords}

\section{Introduction}\label{sec:intro}

Due to the prevalence of self-suspensions in real-time and cyber-physical systems, scheduling tasks with self-suspension has become a significant research topic since its inception in 1988~\cite{DBLP:conf/rtss/RajkumarSL88}.
A self-suspending task may temporarily yield the processor to other tasks for a certain period of time, in addition to being preempted by higher-priority tasks. 
Self-suspension can arise due to various factors, such as I/O- or memory-intensive tasks~\cite{Kang:rtss07,Kato_2011}, multiprocessor synchronization~\cite{DBLP:books/sp/22/Brandenburg22}, hardware acceleration by using coprocessors and computation offloading~\cite{nimmagadda2010real,DBLP:conf/ecrts/TomaC13,DBLP:conf/dac/LiuCTKD14}, scheduling of parallel tasks~\cite{DBLP:conf/sies/FonsecaNNP16,DBLP:conf/ecrts/UeterGBC21}, real-time tasks in multi-core systems with shared memory~\cite{DBLP:conf/dac/HuangCR16}, timing analysis of deferrable servers~\cite{LiuChen:rtss2014,DBLP:conf/ecrts/ChenGJBC22}, dynamic reconfigurable FPGAs for real-time applications~\cite{DBLP:conf/rtss/BiondiBPRMB16}, and real-time communication for networks-on-chip~\cite{DBLP:conf/rtcsa/UeterCBVM20}. 
The suspension time between consecutive computation segments can vary widely depending on the application, ranging from microseconds to hundreds of milliseconds or even seconds.

Ensuring timing guarantees for individual tasks is crucial in real-time systems.
There are two primary approaches for estimating these guarantees.
The first approach involves performing a schedulability test tailored to the specific scheduling algorithm to determine if a given task set can be successfully scheduled.
Alternatively, the \textit{worst-case response time} (WCRT) of each task can be analyzed to validate potential deadline violations. 
However, the validation of timing guarantees becomes challenging when tasks exhibit \emph{self-suspending} behavior.
As pointed out in the review paper by Chen~et~al.~\cite{suspension-review-jj}, ``\emph{allowing tasks to self-suspend ... conversely has the effect that key insights underpinning the analysis of non-self-suspending tasks no longer hold}.''.
Consequently, the unintuitive timing behavior of self-suspending tasks has led to several flaws in the existing literature~\cite{suspension-review-jj,GuenzelC20_RTSJournal,GuenzelC21_RTSJournal}.

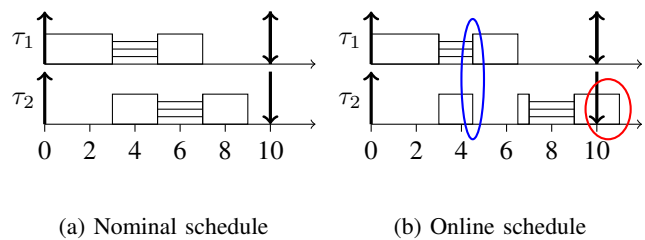
\begin{figure}[tb]
	\centering	
	\begin{subfigure}{.49\linewidth}
	\centering
	\begin{tikzpicture}[yscale=0.4, xscale=0.3]
	\begin{scope}[shift={(0,2)}] 
		\taskname{$\tau_1$}
		
		\timeline{0}{12}{}
		
		\releases{0,10}
		\deadlines{10}
		
		\exec{0}{3}
		\exec{5}{7}		
		
		\susp{3}{5}		
	\end{scope}
	
	\begin{scope}[shift={(0,0)}] 
		\taskname{$\tau_2$}
		
		\timeline{0}{12}{}
		\labelling{0}{10}{2}{0}
		
		\releases{0}
		\deadlines{10}
		
		\exec{3}{5}
		\exec{7}{9}		
		
		\susp{5}{7}
	\end{scope}
	\end{tikzpicture}
	\label{fig:timing_anomaly_a}
	\caption{Nominal schedule}
	\end{subfigure}%
	\begin{subfigure}{.49\linewidth}
		\centering
		\begin{tikzpicture}[yscale=0.4, xscale=0.3]
		\begin{scope}[shift={(0,2)}] 
			\taskname{$\tau_1$}
			
			\timeline{0}{12}{}
			
			\releases{0,10}
			\deadlines{10}
			
			\exec{0}{3}
			\exec{4.5}{6.5}		
			
			\susp{3}{4.5}		
		\end{scope}
		
		\begin{scope}[shift={(0,0)}] 
			\taskname{$\tau_2$}
			
			\timeline{0}{12}{}
			\labelling{0}{10}{2}{0}
			
			\releases{0}
			\deadlines{10}
			
			\exec{3}{4.5}
			\exec{6.5}{7}
			\exec{9}{11}		
			
			\susp{7}{9}
			\draw[blue, thick] (4.5,1.5) ellipse (0.5cm and 2cm);
			\draw[red, thick] (10.5,0.5) ellipse (1cm and 1cm);
		\end{scope}
		\end{tikzpicture}
		\label{fig:timing_anomaly_b}
		\caption{Online schedule}
		\end{subfigure}
	\caption{Example of a timing anomaly.  Assume that segments from $\tau_1$ have higher priorities than segments from $\tau_2$.  (a) A nominal schedule generated based on the WCET and the maximum suspension time of the segments;  (b) Task $\tau_2$ misses its deadline due to the suspension interval from $\tau_1$ finishes earlier at time \(4.5\) instead of $5$.}
	\label{fig:timing_anomaly}	
\end{figure}

Unlike tasks without self-suspending behaviors, where the WCRT is achieved when all jobs execute for their \textit{worst-case execution time} (WCET), self-suspending behavior can introduce \emph{timing anomalies}. 
A timing anomaly occurs when decreasing the execution or suspension time for certain jobs results in an increase in the response time for another job. 
Since the actual execution or suspension time of a segment can be less than its WCET or maximum suspension time, its succeeding segments may become ready for execution earlier and interfere with segments from other tasks.
An example of such timing anomalies is illustrated in Figure~\ref{fig:timing_anomaly}, where a shorter suspension of a higher-priority task results in increased interference with a lower-priority task. 
Counterintuitively, enforcing the suspension time to the maximum can improve the schedulability~\cite{DBLP:conf/rtcsa/SchonbergerHBCC18}.

The release jitter of tasks can be another source of timing anomalies.
For periodic tasks with release jitter, the actual release time of a job may deviate from its expected release time, leading to irregularities in the execution patterns.
The release jitter can be caused by various factors, such as external events, shared resources, interrupt handling, or variations in system workload.

To estimate timing guarantees for tasks with release jitter, a common approach involves considering the \emph{maximum release jitter}, i.e., jobs are released at their latest possible release time, while determining if a task set is schedulable.
However, this approach is pessimistic and rarely occurs in practice.
Nevertheless, a shorter actual release jitter may result in a timing anomaly, as shown in Figure~\ref{fig:release_jitter_anomaly}.

\begin{figure}[tb]
	\centering	
	\begin{subfigure}{.5\linewidth}
	\centering
	\begin{tikzpicture}[yscale=0.4, xscale=0.3]
	\begin{scope}[shift={(0,2)}] 
		\taskname{$\tau_1$}
		
		\timeline{0}{12}{}
		
		\releases{0,10}
		\deadlines{10}
		
		\exec{2}{3}
		\exec{5}{8}		
		
		\susp{3}{5}

		\draw[->, dashed, very thick, color=blue] (2,0) -- (2, 1.75);
	\end{scope}
	
	\begin{scope}[shift={(0,0)}] 
		\taskname{$\tau_2$}
		
		\timeline{0}{12}{}
		\labelling{0}{10}{2}{0}
		
		\releases{0}
		\deadlines{10}
		
		\exec{1}{2}
		\exec{3}{5}
		\exec{8}{9.6}		
		
		\susp{5}{8}

		\draw[->, dashed, very thick, color=blue] (1,0) -- (1, 1.75);
	\end{scope}
	\end{tikzpicture}
	\label{fig:jitter_a}
	\caption{Nominal schedule}
	\end{subfigure}%
	\begin{subfigure}{.5\linewidth}
		\centering
		\begin{tikzpicture}[yscale=0.4, xscale=0.3]
		\begin{scope}[shift={(0,2)}] 
			\taskname{$\tau_1$}
			
			\timeline{0}{12}{}
			
			\releases{0,10}
			\deadlines{10}
			
			\exec{1}{2}
			\exec{4}{7}		
			
			\susp{2}{4}
			\draw[->, dashed, very thick, color=blue] (1,0) -- (1, 1.75);
		\end{scope}
		
		\begin{scope}[shift={(0,0)}] 
			\taskname{$\tau_2$}
			
			\timeline{0}{12}{}
			\labelling{0}{10}{2}{0}
			
			\releases{0}
			\deadlines{10}
			
			\exec{2}{4}
			\exec{7}{8}
			\exec{11}{12}
			
			\susp{8}{11}
			\draw[->, dashed, very thick, color=blue] (1,0) -- (1, 1.75);
			\draw[red, thick] (10.5,0.5) ellipse (1cm and 1cm);
		\end{scope}
		\end{tikzpicture}
		\label{fig:jitter_b}
		\caption{With release jitter}
		\end{subfigure}
	\caption{Timing anomaly caused by release jitter.  Assume that segments from $\tau_1$ have higher priorities than segments from $\tau_2$.  (a) A feasible nominal schedule considering the maximum release jitter of both tasks;  (b) Task $\tau_2$ misses its deadline due to a smaller actual release jitter from $\tau_1$.}
	\label{fig:release_jitter_anomaly}	
\end{figure}
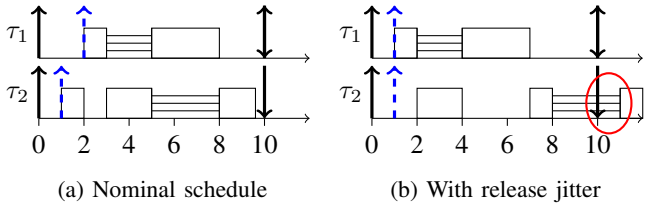

In this work, we focus on preemptive scheduling of segmented self-suspension periodic tasks on a uniprocessor system, which was previously analyzed by over-approximation with sporadic task model~\cite{DBLP:conf/rtcsa/SchonbergerHBCC18,RTCSA-BletsasA05, RTSS-ChenL14,DBLP:conf/rtcsa/PengF16,WC16-suspend-DATE, Kim2016,ecrts15nelissen,Bruggen16RTNS,DBLP:conf/ecrts/ChenHHMB19}.
Despite the prevailing research results of self-suspending tasks, most of them focus on the \textit{sporadic} task model, where the jobs of a task are specified with a minimum inter-arrival time.
Nevertheless, periodic task activation is a prevalent industry practice, with $82\%$ of the investigated systems adhering to this approach in an empirical study~\cite{DBLP:journals/rts/AkessonNNAD22}. 
To the best of our knowledge, there are only two works considering periodic tasks with self-suspension: Yalcinkaya et al.~\cite{DBLP:conf/date/YalcinkayaNB19} for segmented self-suspending tasks scheduled non-preemptively; and G\"unzel et~al.~\cite{guenzel2020sched_test_edf} for dynamic self-suspending tasks scheduled preemptively.
These works indicate that a dedicated analysis of the periodic job release pattern of a task can be beneficial. 

\textbf{Contributions}: 
To ensure the schedulability of segmented self-suspension periodic tasks with release jitter, our proposed solutions are based on the following two steps:
\begin{itemize}
	\item \textbf{Step 1}: A \emph{nominal schedule} is constructed and recorded offline based on the maximum release jitter of tasks, the worst-case execution time of the computation segments, and the maximum suspension time of the suspension intervals.
	\item \textbf{Step 2}: The \emph{online schedule} refers to the nominal schedule to make scheduling decisions \emph{without any risk of timing anomalies} so that the schedulability (feasibility) of the online schedule is guaranteed as long as schedulability (feasibility) of the nominal schedule is guaranteed.
\end{itemize}

In our previously published version at the 29th IEEE Real-Time and Embedded Technology and Applications Symposium (RTAS'23)~\cite{DBLP:conf/rtas/LinGSSCC23}, we considered only periodic segmented self-suspending tasks \textbf{without} release jitter, and made the following contributions:

\begin{itemize}
	\item We proposed two treatments, \emph{segment  release time enforcement} and \emph{segment priority modification}, regarding anomaly-free online scheduling.
	With \emph{segment release time enforcement}, a task segment is enforced to start its execution \textbf{no earlier} than its release time in the nominal schedule.
	For \emph{segment priority modification}, the priorities of the segments are adjusted based on their nominal finishing times, so that a segment with an earlier nominal finishing time cannot be interfered by segments with later nominal finishing times.
	We proved that the schedulability of the segmented self-suspension periodic tasks \textbf{without} release jitters under these two treatments is guaranteed if and only if the nominal schedule ensures that all jobs meet their deadlines, resulting in an exact schedulability test without any over-approximation.	
	\item The empirical results demonstrated that the proposed treatments achieve higher acceptance ratios in terms of schedulability compared to state-of-the-art over-approximations under different task set configurations.
	\item We implemented the segment-level fixed-priority scheduling mechanism on RTEMS~\cite{rtems}, an open-source Real-Time Operating System (RTOS), so that the treatment \emph{segment priority modification} can be performed on RTEMS.
	We discuss the implementation details and showcase the validity of the treatment with an example.
\end{itemize}

By incorporating segmented self-suspension periodic tasks \textbf{with} release jitter, we expand the scope of our prior work and make the following supplementary contributions in this journal version:
\begin{itemize}
	\item 
	We distinguish the release time of a job from its first computation segment, and model the difference, i.e., the release jitter, as a suspension interval before the first computation segment in Section~\ref{sec:task_model_jitter}.
	We modify the two proposed treatments and prove their effectiveness in eliminating timing anomalies for tasks with release jitter.
	\item In Section~\ref{sec:eval_jitter}, we assess the influence of release jitter on the acceptance ratio of nominal schedules.
	Our findings indicate that incorporating the maximum release jitter into consideration during the generation of nominal schedules leads to only a marginal reduction in the acceptance ratio for high-utilization task sets, compared to schedules generated without considering release jitter. 
	This suggests that the proposed treatments effectively mitigate timing anomalies for tasks with release jitter without substantially compromising schedulability.		
\end{itemize}

\section{System Model and Background}\label{sec:sys_model}

In this paper, we focus on preemptive schedules of \textit{segmented self-suspension periodic tasks with bounded release jitter} on a uniprocessor system.
First, we introduce the task model and define the notations for segmented self-suspending task \textbf{without} release jitter in Section~\ref{sec:task_model}.
In Section~\ref{sec:task_model_jitter}, we extend the task model to consider release jitter.
Definitions and observations on segment-level fixed-priority preemptive scheduling are presented in Section~\ref{sec:scheduling}.
Table~\ref{tb:notations} summarizes the notations being used in this paper.

\begin{table}[tb]
	\caption{Notations}
	\label{tb:notations}
	\resizebox{\linewidth}{!}{
	\begin{tabular}{l|l}
		\toprule
		Notation			& Description	\\ \midrule
		$\tau$				& a segmented self-suspending task		\\
		$C_{\tau}^i$		& the WCET of the $i$-th segment in task $\tau$.	\\
		$S_{\tau}^i$		& \makecell[l]{the maximum suspension time of the $i$-th \\ suspending interval in task $\tau$}	\\
		$\Jcal{\tau}$		& the maximum release jitter of jobs released by task \(\tau\) \\
		\midrule
		$J$					& a job released by a task	\\
		$\gamma$			& a computation segment within a job	\\		
		$C_{\gamma}$, $\Cbar_{\gamma}$	& the WCET / actual execution time of segment $\gamma$.	\\
		$S_{\gamma}$, $\Sbar_{\gamma}$	& \makecell[l]{the maximum / actual suspension time after \\ segment $\gamma$}	\\		
		$s_{J}$,$\sbar_{J}$, $s_{\gamma}$, $\sbar_{\gamma}$	& \makecell[l]{the nominal / actual starting time of job $J$ \\ / segment $\gamma$.}	\\
		$f_{J}$,$\fbar_{J}$, $f_{\gamma}$, $\fbar_{\gamma}$	& \makecell[l]{the nominal / actual finishing time of job $J$ \\ / segment $\gamma$.}	\\
		$r_{\gamma}, \rbar_{\gamma}, r^\enf_{\gamma}$ & the nominal / actual / enforced release time of $\gamma$ \\
		$\ex(\gamma), \exbar(\gamma)$		& \makecell[l]{the set of time points at which $\gamma$ is executed \\in the nominal / actual schedule}\\
		$W_{\gamma}(r, t), \Wbar_{\gamma}(r, t)\!\!\!$ & \makecell[l]{the total amount of time segments with a higher \\ priority than $\gamma$ are executed during $[r,t)$ in $\Scal$ / $\Scalbar$.} \\
		\midrule
		$\mJbb$, $\mTbb$	& the mapping from segment to job and job to task		\\
		$\Scal, \Scalbar$	& the nominal schedule and the actual schedule	\\
		$\gpi, >_P$			& total priority / preference ordering \\
		$\mu()$				& the Lebesgue measure \\
	\bottomrule
	\end{tabular}
	}
\end{table}

\subsection{Self-Suspending Task}\label{sec:task_model}

Two self-suspension models are mostly studied: the \emph{segmented} self-suspension model~\cite{GuenzelC21_RTSJournal,DBLP:conf/rtcsa/SchonbergerHBCC18,RTCSA-BletsasA05, RTSS-ChenL14,DBLP:conf/rtcsa/PengF16,WC16-suspend-DATE, Kim2016,ecrts15nelissen,Bruggen16RTNS,DBLP:conf/ecrts/ChenHHMB19,DBLP:conf/date/YalcinkayaNB19} which assumes a fixed iterating pattern of execution segments and suspension intervals, and the \emph{dynamic} self-suspension model~\cite{LiuChen:rtss2014,GuenzelC20_RTSJournal,guenzel2020sched_test_edf,ECRTS-AudsleyB04,huangpass:dac2015, ChenECRTS2016-suspension,DBLP:conf/ecrts/Devi03,DBLP:conf/rtss/GunzelUC21,DBLP:conf/ecrts/AromoloBN22} that allows arbitrary execution and suspension interleaves, meaning that a job can self-suspend as long as its total suspension time does not exceed its maximum suspension time. 

The dynamic self-suspension model can cater to any task model with self-suspending behavior, but its flexibility
results in a very pessimistic analysis if the suspension behaviors of the tasks can be described more precisely using the segmented self-suspension model.  Detailed discussions of self-suspension can be found in the survey papers by Chen~et~al.~\cite{suspension-review-jj,DBLP:conf/rtcsa/ChenBH017}.

We adopt the \textit{segmented self-suspension} task model in this work.
The system consists of several (finitely many) tasks $\Tbb$.
Each task $\tau \in \Tbb$ releases jobs successively, and each job $J$ is divided into several computation segments.
We denote by $\Jbb$ the set of all jobs and by $\Cbb$ the set of all computation segments.
Each computation segment belongs to one job and each job belongs to one task, therefore we have the following mappings:
\begin{equation}
	\Cbb \overset{\mJbb}{\longrightarrow} \Jbb \overset{\mTbb}{\longrightarrow} \Tbb, \nonumber
\end{equation}
where $\mJbb$ maps a segment $\gamma \in \Cbb$ to its corresponding job $\mJbb(\gamma) \in \Jbb$, and the function $\mTbb$ maps a job $J$ to its releasing task $\mTbb(J) \in \Tbb$.

Each \textbf{segmented self-suspending task} $\tau$ consists of $M_{\tau}$ computation segments and $M_{\tau}-1$ suspension intervals, $M_{\tau} \geq 1$.
We denote $\tau$ as 
\begin{equation}
	\tau = (\Ex_{\tau}, \Rel_\tau), \nonumber
\end{equation}
where $\Ex_\tau$ describes the execution behavior of $\tau$ and $\Rel_\tau$ describes the release behavior of $\tau$.
To be more specific, $\Ex_\tau = (C_{\tau}^0, S_{\tau}^0, C_{\tau}^1, S_{\tau}^1, \ldots, S_{\tau}^{M_i-2}, C_{\tau}^{M_i-1})$, where $C_{\tau}^j >0 $ is the WCET of the $j$-th computation segment in $\tau$, $S_{\tau}^j > 0$ is the maximum suspension time of the $j$-th suspension interval in $\tau$.
Moreover, $\Rel_\tau = (\rho_\tau^1, \rho_\tau^2, \dots) \in \Rbb^\Nbb$ is the list of all release times ordered in increasing order.
For example, a periodic task $\tau$ with a period of $10$ and the first release at time $0$ has $\Rel_\tau = (0, 10, 20, \dots)$.

We assume that the segments are scheduled according to a \emph{Segment-level Fixed-Priority} (S-FP) scheduling mechanism, which indicates that there is a total priority ordering $\lpi$ of the segments $\Cbb$.
If a segment $\gamma \in \Cbb$ has a higher priority than another segment $\omega \in \Cbb$, then we write $\omega \lpi \gamma$.  
We distinguish two different schedules,
$\Scal$ is the \emph{nominal} schedule that is obtained when each segment executes its worst-case execution time (WCET) and is suspended for its maximum suspension time.
$\Scalbar$ is the \emph{online} schedule where each segment executes up to its WCET and is suspended for up to its maximum suspension time.

Each \textbf{job} $J \in \Jbb$ has a certain starting and finishing time, denoting the first and the last time that a job is executed in a schedule.
We denote by $s_J$ and $f_J$ the starting and finishing time of job $J$ in the nominal schedule $\Scal$, respectively.
Moreover, we denote by $\sbar_J$ and $\fbar_J$ the starting and finishing time of job $J$ in the online schedule $\Scalbar$, respectively.
We denote by $r_J$ the release time of job $J$, i.e., $r_J \in \Rel_\tau$ if $\mTbb(J) = \tau$.

Similarly, each \textbf{computation segment} $\gamma \in \Cbb$ has a starting and a finishing time.
In the nominal schedule $\Scal$ they are denoted as $s_{\gamma}$ and $f_{\gamma}$, respectively, and in online schedule $\Scalbar$ they are denoted by $\sbar_{\gamma}$ and $\fbar_{\gamma}$, respectively.
We denote by $\ex(\gamma) \subset \Rbb$ the set of time points at which $\gamma$ is executed in $\Scal$.
Moreover, we denote by $C_\gamma \in \Rbb$ the total amount of time that the segment $\gamma$ is executed and by $S_\gamma$ the total amount of time that the segment $\gamma$ is suspended in $\Scal$.
By definition $C_\gamma = \mu(\ex(\gamma))$ holds for the Lebesgue measure $\mu$.
For the online schedule $\Scalbar$ we use the notation $\exbar(\gamma)$, $\Cbar_\gamma$ and $\Sbar_\gamma$ instead.

More specifically, let $J$ be a job of task $ \tau=(\Ex_\tau, \Rel_\tau )$
with $\Ex_\tau = (C_{\tau}^0, S_{\tau}^0, C_{\tau}^1, S_{\tau}^1, \ldots, S_{\tau}^{M_\tau-2}, C_{\tau}^{M_\tau-1})$, and let $(\gamma_0, \dots, \gamma_{M_\tau-1})$ denote the computation segments of job $J$.
Then in the nominal schedule $\Scal$

\begin{itemize}
	\item $\gamma_0$ is executed for $C_{\gamma_0} = C_{\tau}^0$ time units
	\item $\gamma_j$ is executed for $C_{\gamma_j} = C_{\tau}^j$ time units after suspending for $S_{\gamma_j} = S_{\tau}^{j-1}$ time units, $j \geq 1$
\end{itemize}
and in the online schedule $\Scalbar$
\begin{itemize}
	\item $\gamma_0$ is executed for $\Cbar_{\gamma_0} \in (0, C_{\tau}^0]$ time units
	\item $\gamma_j$ is executed for $\Cbar_{\gamma_j} \in (0, C_{\tau}^j]$ time units after suspending for $\Sbar_{\gamma_j} \in (0, S_{\tau}^{j-1}]$ time units, $j \geq 1$.
\end{itemize}

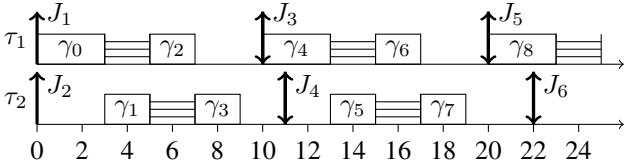
\begin{figure}[tb]
	\centering	
	\begin{tikzpicture}[yscale=0.4, xscale=0.3]
	\begin{scope}[shift={(0,2)}] 
		\taskname{$\tau_1$}
		
		\timeline{0}{26}{}
		
		\releases{0,10, 20}
		\deadlines{10, 20}
		
		\execname{0}{3}{$\gamma_0$}
		\execname{5}{7}{$\gamma_2$}
		\execname{10}{13}{$\gamma_4$}
		\execname{15}{17}{$\gamma_6$}
		\execname{20}{23}{$\gamma_8$}
		
		\susp{3}{5}
		\susp{13}{15}
		\susp{23}{25}

		\node[anchor=south west] at (0,1) {$J_1$};
		\node[anchor=south west] at (10,1) {$J_3$};
		\node[anchor=south west] at (20,1) {$J_5$};
	\end{scope}
	
	\begin{scope}[shift={(0,0)}] 
		\taskname{$\tau_2$}
		
		\timeline{0}{26}{}
		\labelling{0}{25}{2}{0}
		
		\releases{0,11,22}
		\deadlines{11,22}
		
		\execname{3}{5}{$\gamma_1$}
		\execname{7}{9}{$\gamma_3$}
		\execname{13}{15}{$\gamma_5$}
		\execname{17}{19}{$\gamma_7$}		
		
		\susp{5}{7}
		\susp{15}{17}

		\node[anchor=south west] at (0,0.6) {$J_2$};
		\node[anchor=south west] at (11,0.6) {$J_4$};
		\node[anchor=south west] at (22,0.6) {$J_6$};
	\end{scope}
	\end{tikzpicture}
	\caption{Example of a nominal schedule of a segmented self-suspending task set.}
	\label{fig:example_nominal}	
\end{figure}

\begin{exmpl}
	Figure~\ref{fig:example_nominal} demonstrates the nominal schedule of a task set with two segmented self-suspension periodic tasks,
	$\tau_1 = (\Ex_{\tau_1}, \Rel_{\tau_1}) = ((3, 2, 2), \\ (0, 10, 20, \dots))$ and $\tau_2 = (\Ex_{\tau_2}, \Rel_{\tau_2}) = ((2, 2, 2), (0, 11, 22, \dots))$.
	We have $\Tbb = \setof{\tau_1, \tau_2}$, $\Jbb = \setof{J_1, J_2, \dots}$, and $\Cbb = \setof{\gamma_0, \gamma_1, \dots}$. 
	The first job released by $\tau_1$, denoted as $J_1$, consists of two computation segments, $\gamma_0$ and $\gamma_2$.
	Segment
	$\gamma_0$ has a starting time $s_{\gamma_0} = 0$ and executes for $C_{\gamma_0} = 3$ time units during $\ex(\gamma_0) = [0,3)$, while $s_{\gamma_2} = 5$, $C_{\gamma_2} = 2$ and $\ex(\gamma_2) = [5,7)$ for $\gamma_2$.
	The maximum suspension time after executing $\gamma_0$, denoted as $S_{\gamma_0}$, is $2$.
	The starting time $s_{J_1}$ and finishing time $f_{J_1}$ of job $J_1$ are $0$ and $7$, respectively.
\end{exmpl}

\subsection{Release Jitter}\label{sec:task_model_jitter}

\emph{Release jitter} is the difference between the \emph{expected} and \emph{actual} release time of a job.
We denote by $\Jcal_{\tau}$ the \emph{maximum release jitter} among all jobs released by task $\tau$.
For periodic tasks with release jitter, the actual release time of a job $J$ may fall into the interval $[r_J, r_J + \Jcal_{\mTbb(J)}]$, where $r_J \in Rel_{\mTbb(J)}$ is the expected release time of $J$.
Release jitter may increase the response time of a job and interfere with the execution of other jobs, leading to potential deadline misses which jeopardize the timing guarantees of real-time systems, as shown in Figure~\ref{fig:release_jitter_anomaly}.

In this work, we focus exclusively on the release jitter at the job level.
For jobs released by the same task, they can experience different release jitters, but are bounded by the maximum release jitter of the task.
Release jitter on the segment level is not specifically addressed in this work, as it is expected to be encompassed within the suspension intervals.

The difference between actual and expected release time of a job can be modeled as self-suspension before its first computation segment.
That is, we
revise the execution behavior $\Ex_\tau$ of task $\tau$ introduced in Section~\ref{sec:task_model} to consider the release jitter as follows.
\begin{equation}
	\Ex_\tau = (S_{\tau}^{*}, C_{\tau}^0, S_{\tau}^0, C_{\tau}^1, S_{\tau}^1, \ldots, S_{\tau}^{M_\tau-2}, C_{\tau}^{M_\tau-1}), \nonumber
\end{equation} 
where $S_{\tau}^{*}$ represents the suspension interval for release jitter with the maximum length equal to $\Jcal_{\tau}$.
It is noteworthy that the suspension interval $S_{\tau}^{*}$ can be larger or equal to $0$ to allow cases without release jitter, while the other suspension intervals $S_{\tau}^j$ are assumed to be always larger than $0$. 
In our model, $\Rel_\tau$ is not affected by the release jitter, as it only specifies the expected release time.

With the revised task model, let $(\gamma_0, \dots, \gamma_{M_\tau-1})$ denote the computation segments of job $J$.
Then in the nominal schedule $\Scal$

\begin{itemize}	
	\item $\gamma_0$ is executed for $C_{\gamma_0} = C_{\tau}^0$ time units after suspending for $S_{\gamma_0} = S_{\tau}^{*}$ time units
	\item $\gamma_j$ is executed for $C_{\gamma_j} = C_{\tau}^j$ time units after suspending for $S_{\gamma_j} = S_{\tau}^{j-1}$ time units, $j \geq 1$
\end{itemize}
and in the online schedule $\Scalbar$
\begin{itemize}
	\item $\gamma_0$ is executed for $\Cbar_{\gamma_0} \in (0, C_{\tau}^0]$ time units after suspending for $\Sbar_{\gamma_0} \in [0, S_{\tau}^{*}]$ time units
	\item $\gamma_j$ is executed for $\Cbar_{\gamma_j} \in (0, C_{\tau}^j]$ time units after suspending for $\Sbar_{\gamma_j} \in (0, S_{\tau}^{j-1}]$ time units, $j \geq 1$
\end{itemize}

Throughout the remainder of this work, we employ the revised task model, specifically the segmented self-suspending task with release jitter, unless indicated otherwise.

\subsection{Segment-level Fixed-Priority Preemptive Scheduling}
\label{sec:scheduling}

In this work, we employ the \emph{Segment-level Fixed-Priority} (S-FP) scheduling mechanism for scheduling the segmented self-suspending tasks.
S-FP scheduling covers scheduling algorithms such as Earliest-Deadline-First (EDF) and Task-level Fixed-Priority (T-FP)~\cite{lehoczky89,DBLP:conf/rtss/Baker03}.
With S-FP, we formulate two different schedules, the \emph{nominal} schedule $\Scal$ and the \emph{online} schedule $\Scalbar$, as introduced in Section~\ref{sec:task_model}.

We make the following definitions for the nominal schedule $\Scal$.
A Segment-level Fixed-Priority (S-FP) scheduler chooses the segment with the highest priority among all segments in the ready queue for execution.
A segment is ``ready'' and inserted into the ready queue according to Definition~\ref{def:ready}.
If the chosen segment has a higher priority than the one currently being executed, the current executing segment is preempted and moved to the ready queue.

\begin{defn}\label{def:ready}
	A computation segment $\gamma \in \Cbb$ is \emph{ready} at time $t \in \Rbb$ in the nominal schedule $\Scal$ if:
	\begin{enumerate}
		\item there is remaining workload to be executed in $\gamma$ in $\Scal$;
		\item $\gamma$ is or has been \emph{released} at time $t$ in $\Scal$.
	\end{enumerate}
\end{defn}

\begin{defn}\label{def:release}
	Let $J \in \Jbb$ be a job of task $\tau$ consisting of segments $(\gamma_0, \dots, \gamma_{M_\tau-1})$.
	The first segment $\gamma_0$ is released once the first suspension interval has concluded, i.e., at time $r_{\gamma_0} = r_J + S_\tau^{*}$.
	A subsequent segment $\gamma_j$ is released as soon as the previous segment $\gamma_{j-1}$ finishes and the suspension time is consumed, i.e., at time $f_{\gamma_{j-1}} + S_\tau^{j-1}$.
	In general, we denote by $r_{\gamma}$ the release time of a segment $\gamma\in\Cbb$ in the nominal schedule $\Scal$. 
\end{defn}

With this definition of a segment being \emph{ready}, the segment-level fixed priority preemptive scheduler is \emph{work-conserving} on the segment-level, in the sense that whenever there are ready segments, the segment of the highest priority task is executed.
This leads to the following observation for the offline schedule.   

\begin{obs}\label{obs:finish_nominal_rel}
	Let $\gamma \in \Cbb$ be a segment.
	In $\Scal$, the segment $\gamma$ finishes at the lowest $t \in \Rbb$ such that
	\begin{equation}
		t \geq r_\gamma + W_\gamma(r_\gamma, t) + C_\gamma,
	\end{equation}
	where $W_\gamma(r_\gamma, t)$ is the total amount of time that higher priority segments are executed during the interval $[r_\gamma,t)$ in $\Scal$, i.e., 
	\begin{equation}\label{eq:def_W_nominal}
		W_\gamma(r_\gamma,t) := \mu\left( \bigcup_{\omega \gpi \gamma \in \Cbb} \ex(\omega) \cap [r_\gamma, t) \right)
	\end{equation}
\end{obs}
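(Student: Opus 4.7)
The plan is to verify Observation~\ref{obs:finish_nominal_rel} by establishing two facts: that $f_\gamma$ itself satisfies $f_\gamma \geq r_\gamma + W_\gamma(r_\gamma, f_\gamma) + C_\gamma$, and that no smaller $t$ can satisfy the same inequality. I will first argue equality at $t = f_\gamma$, and then rule out any $t < f_\gamma$.

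For the first fact, the key observation is that $\gamma$ is \emph{ready} at every instant in $[r_\gamma, f_\gamma)$ by Definition~\ref{def:ready}: it has been released at $r_\gamma$, and it still has unfinished workload at every time strictly before $f_\gamma$. Invoking segment-level work-conservation (as stated just before the observation) together with the fact that $\lpi$ is a total order, the processor in $[r_\gamma, f_\gamma)$ is never idle and never executes a segment of lower priority than $\gamma$, so at every instant in $[r_\gamma, f_\gamma)$ either $\gamma$ itself or some $\omega \gpi \gamma$ is running. Since the platform is a uniprocessor these two classes of executions are pointwise disjoint, and summing their Lebesgue measures along $[r_\gamma, f_\gamma)$ yields
\begin{equation}
    f_\gamma - r_\gamma \;=\; W_\gamma(r_\gamma, f_\gamma) + \mu\bigl(\ex(\gamma) \cap [r_\gamma, f_\gamma)\bigr) \;=\; W_\gamma(r_\gamma, f_\gamma) + C_\gamma,
\end{equation}
where the last equality uses $\ex(\gamma) \subseteq [r_\gamma, f_\gamma)$, which in turn holds because $\gamma$ cannot be ready before $r_\gamma$ and cannot run after $f_\gamma$ by definition of the finishing time.

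For the second fact I would treat two cases. If $t < r_\gamma$, the interval $[r_\gamma, t)$ is empty, so $W_\gamma(r_\gamma, t) = 0$ and the right-hand side reduces to $r_\gamma + C_\gamma > t$, which rules out this range. If instead $r_\gamma \leq t < f_\gamma$, the same partition of $[r_\gamma, t)$ into executions of $\gamma$ and of strictly higher-priority segments gives $t - r_\gamma = W_\gamma(r_\gamma, t) + \mu(\ex(\gamma) \cap [r_\gamma, t))$. Because $t < f_\gamma$, some portion of $\gamma$'s workload still remains after $t$, so $\mu(\ex(\gamma) \cap [r_\gamma, t)) < C_\gamma$ strictly, and hence $t < r_\gamma + W_\gamma(r_\gamma, t) + C_\gamma$.

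The one delicate step is the claim that throughout $[r_\gamma, f_\gamma)$ only $\gamma$ itself or segments strictly above $\gamma$ in $\lpi$ can execute; this rests on segment-level work-conservation and on strictness of the total priority order, both of which are explicit ingredients of the scheduling model in Section~\ref{sec:scheduling}. A minor measure-theoretic point is that the union defining $W_\gamma$ in~\eqref{eq:def_W_nominal} runs over pointwise disjoint execution sets (by the uniprocessor assumption), so its Lebesgue measure is simply the sum of the individual $\mu(\ex(\omega) \cap [r_\gamma,t))$ and no inclusion--exclusion correction is needed.
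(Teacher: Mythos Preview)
Your argument is correct. The paper itself does not supply a proof of Observation~\ref{obs:finish_nominal_rel}; it simply records the statement as a direct consequence of the segment-level work-conserving behavior described immediately before it. Your write-up formalizes exactly that intuition---partitioning $[r_\gamma,t)$ into execution of $\gamma$ and execution of strictly higher-priority segments, and invoking $C_\gamma>0$ for the case $t<r_\gamma$---so there is nothing to compare beyond noting that you have made explicit what the paper leaves implicit.
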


At all times during the interval $[r_\gamma, s_\gamma)$, segments with higher priorities than $\gamma$ are executed.
Hence, $W_\gamma(r_\gamma, t) = (s_\gamma - r_\gamma) + W_\gamma(s_\gamma, t)$ holds, and the observation can be reformulated as follows.

\begin{obs}\label{obs:finish_nominal_start}
	Let $\gamma \in \Cbb$ be a segment.
	In $\Scal$, the segment $\gamma$ finishes at the lowest $t \in \Rbb$ such that
	\begin{equation}
		t \geq s_\gamma + W_\gamma(s_\gamma, t) + C_\gamma.
	\end{equation}
\end{obs}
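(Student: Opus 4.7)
The plan is to derive Observation~\ref{obs:finish_nominal_start} directly from Observation~\ref{obs:finish_nominal_rel} via an additive decomposition of the interference workload $W_\gamma$ across the two subintervals $[r_\gamma, s_\gamma)$ and $[s_\gamma, t)$. The only substantive ingredient beyond algebraic rewriting is the identity $W_\gamma(r_\gamma, s_\gamma) = s_\gamma - r_\gamma$, which captures the fact that the processor is fully occupied by higher-priority segments during the waiting interval between release and start of $\gamma$.

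To establish that identity, I would invoke work-conservation on the segment level. By Definition~\ref{def:ready}, throughout $[r_\gamma, s_\gamma)$ the segment $\gamma$ is released and has unexecuted workload, so it is ready. Since the S-FP scheduler is work-conserving and $\gamma$ itself does not execute before $s_\gamma$ by definition of the starting time, every instant of $[r_\gamma, s_\gamma)$ must be occupied by some segment $\omega \gpi \gamma$. Hence $[r_\gamma, s_\gamma) \subseteq \bigcup_{\omega \gpi \gamma} \ex(\omega)$ up to a null set formed by the finitely many preemption boundaries, and applying $\mu$ together with the definition in~\eqref{eq:def_W_nominal} yields $W_\gamma(r_\gamma, s_\gamma) = s_\gamma - r_\gamma$. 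For any $t \geq s_\gamma$, $\mu$-additivity on the disjoint union $[r_\gamma, t) = [r_\gamma, s_\gamma) \cup [s_\gamma, t)$ then gives
\[
W_\gamma(r_\gamma, t) = W_\gamma(r_\gamma, s_\gamma) + W_\gamma(s_\gamma, t) = (s_\gamma - r_\gamma) + W_\gamma(s_\gamma, t).
\]

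Substituting into Observation~\ref{obs:finish_nominal_rel} rewrites the defining inequality $t \geq r_\gamma + W_\gamma(r_\gamma, t) + C_\gamma$ as $t \geq s_\gamma + W_\gamma(s_\gamma, t) + C_\gamma$. Any $t$ satisfying the latter automatically satisfies $t \geq s_\gamma$ (since $W_\gamma \geq 0$ and $C_\gamma > 0$), so the side condition under which the decomposition was valid is free. The minimal solutions of the two inequalities therefore coincide, and both equal $f_\gamma$. The main obstacle I anticipate is the measure-theoretic bookkeeping in the work-conservation step: one must rule out the possibility that a higher-priority segment suspends during $[r_\gamma, s_\gamma)$, leaving the processor idle while $\gamma$ remains ready; but such an idling instant would make $\gamma$ itself the highest-priority ready segment and so force execution before $s_\gamma$, contradicting the minimality of $s_\gamma$. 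Once this is observed, the remainder of the argument is routine.
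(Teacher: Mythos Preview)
Your proposal is correct and follows exactly the paper's own argument: the paper simply notes that higher-priority segments occupy the entire interval $[r_\gamma, s_\gamma)$, deduces $W_\gamma(r_\gamma, t) = (s_\gamma - r_\gamma) + W_\gamma(s_\gamma, t)$, and substitutes into Observation~\ref{obs:finish_nominal_rel}. Your additional remarks on the side condition $t \geq s_\gamma$ and the measure-theoretic bookkeeping are more careful than the paper's one-line justification, but the approach is the same.
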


Similar to Definitions~\ref{def:ready} and \ref{def:release}, we define \emph{ready} and \emph{release time} in the \emph{online} schedule $\Scalbar$ as follows.

\begin{defn}\label{def:ready_online}
	A computation segment $\gamma \in \Cbb$ is \emph{ready} at time $t \in \Rbb$ in the online schedule $\Scalbar$ if:
	\begin{enumerate}
		\item there is remaining workload to be executed in $\gamma$ in $\Scalbar$;
		\item $\gamma$ is or has been \emph{released} at time $t$ in $\Scalbar$.
	\end{enumerate}
\end{defn}

\begin{defn}\label{def:release_online}
	Let $J \in \Jbb$ be a job of task $\tau$ consisting of segments $(\gamma_0, \dots, \gamma_{M_\tau-1})$.
	In the online schedule $\Scalbar$, the first segment $\gamma_0$ is released at time $\rbar_{\gamma_0} := r_J + \Sbar_{\gamma_0}$.
	A subsequent segment $\gamma_j$ is released at time $\rbar_{\gamma_j} := \fbar_{\gamma_{j-1}} + \Sbar_{\gamma_{j}}$.	
	In general, we denote by $\rbar_{\gamma}$ the release time of a segment $\gamma\in\Cbb$ in the online schedule $\Scalbar$.
\end{defn}

\section{Timing Anomalies and Enforcements}\label{sec:prob_def}

Scheduling tasks with self-suspending behavior can lead to timing anomalies.
\emph{Timing anomaly} refers to the response time increase of a job due to the reduction of the execution or suspension time of some other jobs.

Figure~\ref{fig:timing_anomaly} demonstrates an example of a timing anomaly.
Given two segmented self-suspending tasks $\tau_1$ and $\tau_2$, each with two computation segments.
The computation segments from $\tau_1$ have higher priorities than segments from $\tau_2$.
Figure~\ref{fig:timing_anomaly} (a) shows the nominal schedule generated based on the WCET and maximum suspension time of the segments.
If the suspension interval of $\tau_1$ finishes earlier, the second segment of $\tau_1$ preempts the first segment of $\tau_2$, leading to an increased response time from $\tau_2$, as shown in Figure~\ref{fig:timing_anomaly}~(b).

The release jitter of jobs can also lead to timing anomalies, as shown in Figure~\ref{fig:release_jitter_anomaly}.
Consider two tasks $\tau_1$ and $\tau_2$ with the maximum release jitter $\Jcal_{\tau_1} = 2$ and $\Jcal_{\tau_2} = 1$.
Figure~\ref{fig:release_jitter_anomaly} (a) shows a feasible schedule generated based on the maximum release jitter, the WCET, and maximum suspension time of the segments.
However, if the actual release jitter of $\tau_1$ is reduced to $1$, as shown in Figure~\ref{fig:release_jitter_anomaly}~(b), the response time of $\tau_2$ increases, causing a deadline misses.

Timing anomalies can affect the feasibility of a task set, as shown in the previous examples.
To be more specific, it is possible that a task set determined to be schedulable based on the maximum release jitter, the WCET, and the maximum suspension time of the segments can still have deadline violations during runtime due to timing anomalies.
Existing schedulability analyses account for timing anomalies by over-approximation.
To avoid that analytical pessimism, a treatment for eliminating the timing anomalies is essential.
To that end, different mechanisms to reduce the impact of timing anomalies have been developed in the literature:

\begin{itemize}
\item \emph{Period enforcer}~\cite{Raj:suspension1991} intends to
  apply a runtime rule so that \emph{``it forces tasks to behave like
    ideal periodic tasks from the scheduling point of view with no
    associated scheduling penalties''}, summarized in Section 4.3.1
  in the survey paper~\cite{suspension-review-jj}. However, Chen and
  Brandenburg~\cite{ChenBrandenburg-2016-LITES} show that
  \emph{``period enforcement~\cite{Raj:suspension1991} is not strictly
    superior (compared to the base case without enforcement) as it can
    cause deadline misses in self-suspending task sets that are
    schedulable without enforcement.''}
\item \emph{Release guard}~\cite{DBLP:conf/icdcs/SunL96} and
  \emph{release enforcement}~\cite{WC16-suspend-DATE} enforce
  the $j$-th computation segments of two consecutive jobs of a
  real-time task to be released with a guaranteed minimum
  inter-arrival time, summarized in Section 4.3.2 in the survey
  paper~\cite{suspension-review-jj}.
  Figure~\ref{fig:different_enforcement} provides a visual comparison 
  between the online schedules for task $\tau_i$ under two different 
  enforcement methods: \emph{release enforcement} as described in~\cite{WC16-suspend-DATE}, 
  and our proposed approach, \emph{segment release time enforcement}.
  With \emph{release enforcement}~\cite{WC16-suspend-DATE}, the inter-arrival 
  time of each segment is fixed, whereas in our proposed method, the release 
  time of each segment is determined according to the nominal schedule.
\item \emph{Slack enforcement}~\cite{LR:rtas10} creates execution
  enforcement by utilizing the available
  \emph{slack}. G\"unzel~and~Chen~\cite{GuenzelC21_RTSJournal} provide
  counterexamples, indicating that slack enforcement may provoke
  deadline misses and do not guarantee the same WCRT as without slack enforcement.
\end{itemize}

\begin{figure}[tb]
	\centering	
	\begin{subfigure}{\linewidth}
		\centering
		\begin{tikzpicture}[yscale=0.4, xscale=0.4]
			\begin{scope}[shift={(0,0)}] 
				\taskname{$\tau_i$}
				
				\timeline{0}{21}{}
				
				\releases{0, 10, 20}
				\deadlines{10, 20}
				
				\exec{0}{3}
				\exec{6}{9}
				\exec{10}{11}
				\exec{16}{18}				

				\draw[<-, dashed, very thick, color=red] (4,0) -- (4, 1.75);
				\draw[->, dashed, very thick, color=red] (6,0) -- (6, 1.75);
				\draw[<-, dashed, very thick, color=red] (14,0) -- (14, 1.75);
				\draw[->, dashed, very thick, color=red] (16,0) -- (16, 1.75);
				
				\susp{4}{6}
				\susp{14}{15}				
		
			\end{scope}
		\end{tikzpicture}
		\caption{\emph{Release enforcement} proposed in~\cite{WC16-suspend-DATE}}
	\end{subfigure}\\
	\begin{subfigure}{\linewidth}
		\centering
		\begin{tikzpicture}[yscale=0.4, xscale=0.4]
			\begin{scope}[shift={(0,0)}] 
				\taskname{$\tau_i$}
				
				\timeline{0}{21}{}
				\labelling{0}{20}{2}{0}
				
				\releases{0, 10, 20}
				\deadlines{10, 20}
				
				\exec{0}{3}
				\exec{5}{8}
				\exec{10}{11}
				\exec{13}{15}
				
				\susp{3}{5}
				\susp{11}{12}
				
				\draw[->, dashed, very thick, color=red] (5,0) -- (5, 1.75);				
				\draw[->, dashed, very thick, color=red] (13,0) -- (13, 1.75);
			\end{scope}
		\end{tikzpicture}
		\caption{\emph{Segment release time enforcement} in this work}
	\end{subfigure}

	\caption{The online schedules of task $\tau_i$ under different release time enforcement policies, assuming $\tau_i = ((3,2,3), D_i = 10)$.
	The red upward (downward) arrows indicate the enforced release time (deadline) of each computation segment.
	Note that the inter-arrival time of each segment is fixed in~\cite{WC16-suspend-DATE}.}
	\label{fig:different_enforcement}	
\end{figure}
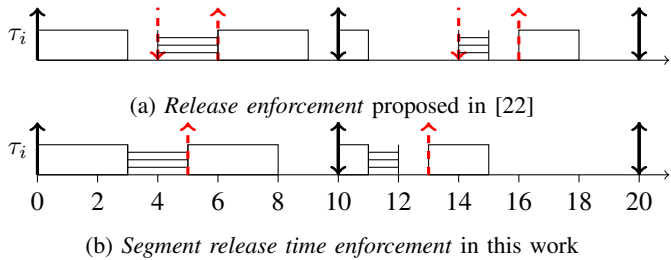

The period enforcer and slack enforcement pursue the
ultimate goal to completely \emph{ignore the self-suspension behavior} of
higher-priority
tasks, which would consequently avoid timing anomalies.
However, none of them achieves this ultimate goal, as shown by Chen~and~Brandenburg~\cite{ChenBrandenburg-2016-LITES} and
G\"unzel~and~Chen~\cite{GuenzelC21_RTSJournal}, and therefore they are not able to guarantee that timing anomalies are eliminated.
The release guard and release enforcement 
do not intend to ignore the self-suspension behavior of higher-priority
tasks or eliminate timing anomalies, but only aim for \emph{
easier schedulability analyses}.
The analyses are achieved by decoupling the segment release time from the finishing time of earlier segments.
Although not explicitly stated, such treatments avoid timing anomalies as a side effect.
However, the treatments do not sustain the Worst-Case Response Time (WCRT) of a task, meaning that the WCRT of a task with treatment may be much larger than the WCRT without treatment.

To the best of our knowledge, in this work we provide the first sustainable treatments that prevent timing anomalies for self-suspending tasks with release jitter.

\section{Treatments of Eliminating Timing Anomalies}\label{sec:treatments}

In this work, our objective is to eliminate timing anomalies without
negative impact on the worst-case response time when scheduling periodic tasks with segmented self-suspension behavior and release jitter.
To that end, we propose two treatments,
\emph{segment release time enforcement} and \emph{segment priority modification}.
In Section~\ref{sec:enforcement}, we introduce \emph{segment release time enforcement}, and prove that no timing anomaly can occur after applying this treatment.
However, \emph{segment release time enforcement} can lead to poor job response time in the average case, since it delays the segments artificially.
Therefore, we propose a \emph{segment priority modification} in Section~\ref{sec:total_order}, which eliminates timing anomalies 
without delaying the segment release time but by altering the segment priority based on the nominal schedule.
It is worth noting that the proposed treatments can handle tasks that release jobs according to \textit{any fixed release pattern}.

\subsection{Treatment 1: Enforcing the Release Time of Segments}\label{sec:enforcement}
Although the nominal schedule $\Scal$ for a segmented self-suspending task set is feasible, timing anomalies can still occur during runtime, as shown in the examples in Section~\ref{sec:prob_def}.
The cause of such timing anomalies is that a higher priority segment 
may start its execution before its release time in $\Scal$
due to the early completion and/or reduced suspension of a previous segment from the same job.
This higher priority segment blocks or preempts segments from lower priority jobs, therefore increasing their response times.
To eliminate timing anomalies, one method is to enforce the release time of the segments, such that all the segments start no earlier than their release time in the nominal schedule.

We note that our proposed enforcement is different from the release
guard~\cite{DBLP:conf/icdcs/SunL96} and release
enforcement~\cite{WC16-suspend-DATE}, presented in
Section~\ref{sec:prob_def}.  In release guard and release enforcement,
a computation segment of all jobs from a task have the same offset.
In our approach, the offset is computed for each job individually.

Our first treatment, \emph{segment release time enforcement}, works as follows.
If a segment $\gamma$ is ready at time $t$ according to Definition~\ref{def:ready_online} but the release time in the nominal schedule $r_\gamma$ is not reached, then the segment execution is delayed further until $r_\gamma$.
Formally, we \emph{redefine the term released} for the online schedule under segment release time enforcement as follows:

\begin{defn}\label{def:release_enforcement}	
	Let $J \in \Jbb$ be a job of task $\tau$ consisting of segments $(\gamma_0, \dots, \gamma_{M_\tau-1})$.
	In the online schedule $\Scalbar$ under \emph{segment release time enforcement},
	\begin{itemize}
		\item the first segment $\gamma_0$ is \emph{released} under two conditions: 1) after the release of job $J$ and the completion of the release jitter, and 2) once the release time in the nominal schedule $r_{\gamma_0}$ is reached.
		That is, segment $\gamma_0$ is released at time $\max(r_{J} + \Sbar_{\gamma_0}, r_{\gamma_0})$, where $0 \leq \Sbar_{\gamma_0} \leq \Jcal_{\mTbb(J)}$.		
		\item a subsequent segment $\gamma_j$ is \emph{released} as soon as the previous segment $\gamma_{j-1}$ finishes and the suspension time is consumed, and the release time in the nominal schedule $r_{\gamma_j}$ is reached. 
		That is, the segment $\gamma_j$ is released at time $\max(\fbar_{\gamma_{j-1}} + \Sbar_{\gamma_{j}}, r_{\gamma_j})$.		
	\end{itemize}
	In general, we denote by $\rbar^\enf_{\gamma}$ the release time of a segment $\gamma\in\Cbb$ in $\Scalbar$ under segment release time enforcement.
\end{defn}

In practice, we can enforce the segment release time by maintaining another queue.
All the segments which are supposed to be inserted into the ready queue are inserted to the new queue instead.
Only the segments that have a nominal starting time no lesser than the current time \(t\) can be moved to the ready queue.
Figure~\ref{fig:exec_states_enforcement} demonstrates the new execution states with \emph{segment release time enforcement}.
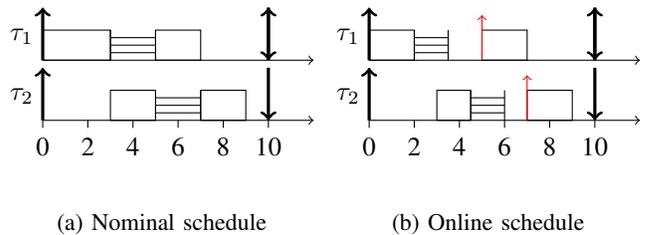
\begin{figure}[tb]
	\centering	
	\begin{subfigure}{.49\linewidth}
		\centering
		\begin{tikzpicture}[yscale=0.4, xscale=0.3]
			\begin{scope}[shift={(0,2)}] 
				\taskname{$\tau_1$}
				
				\timeline{0}{12}{}
				
				\releases{0,10}
				\deadlines{10}
				
				\exec{0}{3}
				\exec{5}{7}		
				
				\susp{3}{5}		
			\end{scope}
			
			\begin{scope}[shift={(0,0)}] 
				\taskname{$\tau_2$}
				
				\timeline{0}{12}{}
				\labelling{0}{10}{2}{0}
				
				\releases{0}
				\deadlines{10}
				
				\exec{3}{5}
				\exec{7}{9}		
				
				\susp{5}{7}
				
			\end{scope}
		\end{tikzpicture}
		\label{fig:exec_states_enforcement_a}
		\caption{Nominal schedule}
	\end{subfigure}%
	\begin{subfigure}{.49\linewidth}
		\centering
		\begin{tikzpicture}[yscale=0.4, xscale=0.3]
			\begin{scope}[shift={(0,2)}] 
				\taskname{$\tau_1$}
				
				\timeline{0}{12}{}
				
				\releases{0,10}
				\deadlines{10}
				
				\exec{0}{2}
				\exec{5}{7}		
				
				\susp{2}{3.5}		
				
				\draw[->, red] (5,0) -- (5,1.5);
			\end{scope}
			
			\begin{scope}[shift={(0,0)}] 
				\taskname{$\tau_2$}
				
				\timeline{0}{12}{}
				\labelling{0}{10}{2}{0}
				
				\releases{0}
				\deadlines{10}
				
				\exec{3}{4.5}			
				\exec{7}{9}		
				
				\susp{4.5}{6}
				
				\draw[->, red] (7,0) -- (7,1.5);
			\end{scope}
		\end{tikzpicture}
		\label{fig:exec_states_enforcement_b}
		\caption{Online schedule}
	\end{subfigure}	
	\caption{Example of the treatment \emph{segment release time enforcement}. A computation segment cannot start before its nominal release time (red arrow) even if the processor idles.}
	\label{fig:exec_states_enforcement}	
\end{figure}

Similar as in Definition~\ref{def:ready_online}, a segment is ready to be executed under segment release time enforcement if there is remaining workload to be executed and if the segment is released according to the previous definition.
In particular, with this definition of being ready, the segment-level fixed-priority preemptive scheduler under segment release time enforcement executes segments that are ready in a work-conserving manner, leading to the following observation.

\begin{obs}\label{obs:finish_online_srte}
	Let $\gamma \in \Cbb$ be a segment. 
	A segment $\gamma$ finishes in the online schedule with segment release time enforcement at the lowest $t \in \Rbb$ such that 
	\begin{equation}
		t \geq \rbar^\enf_\gamma + \Wbar_\gamma(\rbar^\enf_\gamma, t) + \Cbar_\gamma,
	\end{equation}
	where $\Wbar_\gamma(\rbar^\enf_\gamma, t)$ is the amount of time that higher priority segments are executed during the interval $[\rbar^\enf_\gamma,t)$ in the schedule $\Scalbar$ with segment release time enforcement, i.e., 
	\begin{equation}\label{eq:Wbar_enf}
		\Wbar_\gamma(\rbar^\enf_\gamma, t) 
		:= \mu\left( \bigcup_{\omega \gpi \gamma \in \Cbb} \exbar(\omega) \cap [\rbar^\enf_\gamma, t) \right)
	\end{equation}
\end{obs}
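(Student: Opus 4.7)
The plan is to reduce the statement to a standard work-conserving argument, mirroring the reasoning behind Observation~\ref{obs:finish_nominal_rel} but with the enforced release time $\rbar^\enf_\gamma$ replacing $r_\gamma$ and with online quantities in place of the nominal ones. My target is to establish
\[
	\fbar_\gamma = \rbar^\enf_\gamma + \Wbar_\gamma(\rbar^\enf_\gamma, \fbar_\gamma) + \Cbar_\gamma,
\]
together with the strict inequality $t < \rbar^\enf_\gamma + \Wbar_\gamma(\rbar^\enf_\gamma, t) + \Cbar_\gamma$ for every $t \in [\rbar^\enf_\gamma, \fbar_\gamma)$, which together certify that $\fbar_\gamma$ is the sought minimum.

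First I would invoke Definition~\ref{def:release_enforcement} to observe that at every time $t \in [\rbar^\enf_\gamma, \fbar_\gamma)$ the segment $\gamma$ has already been released (in the enforcement sense) and still carries unfinished workload, hence $\gamma$ is \emph{ready} throughout this interval. Because the S-FP scheduler under release time enforcement is work-conserving on the set of ready segments, at each instant in $[\rbar^\enf_\gamma, \fbar_\gamma)$ the processor must be executing either $\gamma$ itself or some segment $\omega \in \Cbb$ with $\omega \gpi \gamma$; any lower-priority segment would be preempted since $\gamma$ is pending.

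Next I would take the Lebesgue measure of $[\rbar^\enf_\gamma, t)$ for an arbitrary $t \in [\rbar^\enf_\gamma, \fbar_\gamma]$ and partition it according to the scheduler's choices, obtaining
\[
	t - \rbar^\enf_\gamma \;=\; \Wbar_\gamma(\rbar^\enf_\gamma, t) + \mu\!\left( \exbar(\gamma) \cap [\rbar^\enf_\gamma, t) \right).
\]
For $t = \fbar_\gamma$ the last term equals $\Cbar_\gamma$, since the entire execution of $\gamma$ lies within $[\rbar^\enf_\gamma, \fbar_\gamma)$; this yields equality in the observation. For $t < \fbar_\gamma$, that term is strictly less than $\Cbar_\gamma$ because $\gamma$ still has remaining workload, so the right-hand side of the observation strictly exceeds $t$ and the inequality fails. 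Hence $\fbar_\gamma$ is the least admissible $t$.

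The main obstacle, and the part I would be most careful about, is a rigorous justification of the ``no idle time on $[\rbar^\enf_\gamma, \fbar_\gamma)$'' claim: the enforcement rule of Definition~\ref{def:release_enforcement} introduces deliberate delays that could, in principle, stall the processor. One must verify that once $\gamma$'s own enforced release $\rbar^\enf_\gamma$ has been crossed, $\gamma$ itself is a valid candidate, so the scheduler cannot idle while $\gamma$ is pending. This observation is essentially built into Definition~\ref{def:release_enforcement}, but it should be spelled out explicitly before the measure-theoretic counting argument above is invoked.
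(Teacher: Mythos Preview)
Your proposal is correct and follows precisely the approach the paper intends: the paper does not give a formal proof of this observation but simply states, immediately before it, that ``with this definition of being ready, the segment-level fixed-priority preemptive scheduler under segment release time enforcement executes segments that are ready in a work-conserving manner, leading to the following observation.'' Your argument is a careful unpacking of exactly that sentence---showing that $\gamma$ is ready throughout $[\rbar^\enf_\gamma,\fbar_\gamma)$, that work-conservation therefore forbids idle time and lower-priority execution on this interval, and then partitioning the interval's measure---so there is nothing to compare; you have supplied the details the paper left implicit.
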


We now prove that with the treatment \emph{segment release time enforcement}, timing anomalies cannot occur in the online schedule $\Scalbar$. 
Intuitively, if the release time of a segment $\gamma \in \Cbb$ is fixed, the segment finishing time can only become larger if $\Wbar_\gamma$ is larger than $W_\gamma$.
However, since no segment release can be moved forward under the treatment, and if all previous segments finish no later than their finishing time in the nominal schedule, $\Wbar_\gamma$ cannot be larger than $W_\gamma$.
To make this proof formal, we start by rewriting $\Wbar_\gamma$ and $W_\gamma$ using the following lemma.

\begin{lemma}\label{lem:rel_enf_rewrite_W}
	Let $\gamma \in \Cbb$ be a segment.
	With segment release time enforcement, the following equations hold:
	\begin{align}
		\bigcup_{\omega \gpi \gamma \in \Cbb} \ex(\omega)
		& = \bigcup_{\omega \gpi \gamma \in \Cbb} [r_\omega, f_\omega) \label{eq:rel_enf_rewrite_W_nominal}
		\\
		\bigcup_{\omega \gpi \gamma \in \Cbb} \exbar(\omega)
		& = \bigcup_{\omega \gpi \gamma \in \Cbb} [\rbar^\enf_\omega, \fbar_\omega)  \label{eq:rel_enf_rewrite_W_online}
	\end{align}
\end{lemma}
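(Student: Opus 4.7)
The plan is to prove each of the two equations by double inclusion, focusing first on the nominal schedule~\eqref{eq:rel_enf_rewrite_W_nominal}; the online case~\eqref{eq:rel_enf_rewrite_W_online} then follows by the same template, substituting the online notions of release/finishing time and invoking Observation~\ref{obs:finish_online_srte} together with Definitions~\ref{def:ready_online} and~\ref{def:release_enforcement}.

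For the easy inclusion $\subseteq$: by the very meaning of $r_\omega$ and $f_\omega$ as the release and finishing time of $\omega$, a segment cannot be executed before it is released nor after it is finished, so $\ex(\omega) \subseteq [r_\omega, f_\omega)$ for every $\omega \gpi \gamma$. Taking unions over all $\omega \gpi \gamma$ gives the claim. The same reasoning applies with $\exbar(\omega)$, $\rbar^\enf_\omega$, and $\fbar_\omega$ in the online schedule under segment release time enforcement.

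For the nontrivial inclusion $\supseteq$: fix an arbitrary $t \in [r_\omega, f_\omega)$ for some $\omega \gpi \gamma$. Since $t \geq r_\omega$ the segment $\omega$ is released by time $t$ (Definition~\ref{def:release}), and since $t < f_\omega$ the segment $\omega$ still has remaining workload; by Definition~\ref{def:ready}, $\omega$ is ready at $t$. Because the S-FP scheduler is work-conserving at the segment level (Section~\ref{sec:scheduling}), at time $t$ some segment $\omega'$ is being executed, and the scheduler's choice of $\omega'$ over the ready $\omega$ forces $\omega' \geqpi \omega$. By transitivity of the priority order, $\omega' \geqpi \omega \gpi \gamma$, so $\omega' \gpi \gamma$ and $t \in \ex(\omega') \subseteq \bigcup_{\omega' \gpi \gamma} \ex(\omega')$. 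The online version repeats this argument verbatim, except that ``released'' is now interpreted via Definition~\ref{def:release_enforcement} (so $\omega$ is ready on $[\rbar^\enf_\omega, \fbar_\omega)$) and work-conservation is applied to the enforced scheduler, which by construction still executes the highest-priority segment among those that are released under the enforcement rule and still have remaining workload.

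I expect the only real subtlety to be the boundary behaviour of $\ex(\omega)$ at the endpoints $r_\omega$ and $f_\omega$. Adopting the standard convention that each slice of execution is a half-open interval, $\ex(\omega)$ is a finite union of half-open intervals whose supremum is $f_\omega$ and whose infimum is at least $r_\omega$, and the set equalities hold literally; otherwise they hold up to a Lebesgue-null set, which is harmless because the lemma is used inside the measure $\mu$ in \eqref{eq:def_W_nominal} and \eqref{eq:Wbar_enf}. Everything else is bookkeeping on the defined notions.
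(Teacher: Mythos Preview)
Your proof is correct and follows essentially the same approach as the paper: both directions are by double inclusion, with $\subseteq$ immediate from $\ex(\omega)\subseteq[r_\omega,f_\omega)$, and $\supseteq$ from work-conservation, arguing that at any $t\in[r_\omega,f_\omega)$ the segment $\omega$ is ready, so the executing segment has priority $\geqpi\omega\gpi\gamma$. The paper states the $\supseteq$ direction set-theoretically as $[r_\omega,f_\omega)\subseteq \ex(\omega)\cup\bigcup_{\eta\gpi\omega}\ex(\eta)$ rather than pointwise, but this is the same argument; your remark on half-open endpoints is extra care the paper omits.
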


\begin{proof}
	In the following we provide the proof for the nominal schedule $\Scal$ (Equation~\eqref{eq:rel_enf_rewrite_W_nominal}).
	The proof for the online schedule $\Scalbar$ (Equation~\eqref{eq:rel_enf_rewrite_W_online}) is analogous.
	
	$\subseteq$: 
	Since a segment can only be executed during the interval $[r_\omega, f_\omega)$, $\ex(\omega) \subseteq [s_\omega, f_\omega)$ holds for all segments $\omega \in \Cbb$.
	Therefore, $\bigcup_{\omega \gpi \gamma} \ex(\omega)
	\subseteq \bigcup_{\omega \gpi \gamma} [r_\omega, f_\omega)$ as well.
	
	$\supseteq$:
	Consider a segment $\omega \gpi \gamma$.
	Since the schedule is work-conserving,
	if the segment $\omega$ is not executed during $[r_\omega, f_\omega)$, then a segment with higher priority must be executed, i.e., 
	\begin{math}
		[r_\omega,f_\omega) 
		\subseteq \ex(\omega) \cup \bigcup_{\eta \gpi \omega} \ex(\eta).
	\end{math}
	Therefore, 
	\begin{math}
		\bigcup_{\omega \gpi \gamma} [r_\omega, f_\omega)
		\subseteq 
		\bigcup_{\omega \gpi \gamma} \left(\ex(\omega) \cup \bigcup_{\eta \gpi \omega} \ex(\eta)\right)
		= \bigcup_{\omega \gpi \gamma \in \Cbb} \ex(\omega)
	\end{math}
	which concludes the proof.
\end{proof}

With the previous lemma, we reformulate
$W_\gamma(r_\gamma,t)$ from Eq.~\eqref{eq:def_W_nominal} as \\ $\mu\left( \bigcup_{\omega \gpi \gamma \in \Cbb} [r_\omega, f_\omega) \cap [r_\gamma,t) \right)$ and $\Wbar_\gamma(\rbar^\enf_\gamma,t)$ from Eq.~\eqref{eq:Wbar_enf} as \\ $\mu\left( \bigcup_{\omega \gpi \gamma \in \Cbb} [\rbar^\enf_\omega, \fbar_\omega) \cap [\rbar^\enf_\gamma,t) \right)$.
This allows us to prove the following theorem which states that timing anomalies cannot occur under segment release time enforcement.

\begin{thm}\label{thm:enforcement}
	The finishing time of each segment 
	in the online schedule $\Scalbar$ with \emph{segment release time enforcement} is no larger than the finishing time in the nominal schedule $\Scal$, i.e., 
	$f_\gamma \geq \fbar_\gamma$ for all $\gamma \in \Cbb$.
\end{thm}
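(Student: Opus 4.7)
The plan is to prove the theorem by strong induction on the nominal finishing time $f_\gamma$, establishing jointly that (i) $\rbar^\enf_\gamma = r_\gamma$ and (ii) $\fbar_\gamma \leq f_\gamma$ for every segment $\gamma \in \Cbb$. The guiding intuition is that under segment release time enforcement no segment can be released earlier than in the nominal schedule, and whenever every previous segment of the same job finishes no later than its nominal finishing time, the enforced release collapses to the nominal release. The only remaining difference between the two schedules is that actual execution times are at most the WCETs, which can never delay $\gamma$.

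For the inductive step, fix $\gamma$ and assume (i) and (ii) for every $\omega$ with $f_\omega < f_\gamma$. First I prove (i). If $\gamma = \gamma_0$ is the first segment of its job, Definition~\ref{def:release_enforcement} gives $\rbar^\enf_\gamma = \max(r_J + \Sbar_\gamma, r_\gamma) = r_\gamma$ since $\Sbar_\gamma \leq S_\tau^{*} = S_\gamma$. Otherwise let $\gamma'$ be the preceding segment of the same job; because $C_{\gamma'}, C_\gamma, S_\gamma > 0$ one has $f_{\gamma'} < f_\gamma$, so the induction hypothesis yields $\fbar_{\gamma'} \leq f_{\gamma'}$, and hence $\rbar^\enf_\gamma = \max(\fbar_{\gamma'} + \Sbar_\gamma, r_\gamma) \leq \max(f_{\gamma'} + S_\gamma, r_\gamma) = r_\gamma$, which combined with $\rbar^\enf_\gamma \geq r_\gamma$ from enforcement yields equality.

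Next I establish (ii). Using (i), Observation~\ref{obs:finish_online_srte} reduces the task to verifying that $t = f_\gamma$ satisfies $f_\gamma \geq r_\gamma + \Wbar_\gamma(r_\gamma, f_\gamma) + \Cbar_\gamma$. Since $\Cbar_\gamma \leq C_\gamma$ and $f_\gamma = r_\gamma + W_\gamma(r_\gamma, f_\gamma) + C_\gamma$ by Observation~\ref{obs:finish_nominal_rel}, this further reduces to $\Wbar_\gamma(r_\gamma, f_\gamma) \leq W_\gamma(r_\gamma, f_\gamma)$, which I obtain by a pointwise inclusion of the underlying sets. Fix $t' \in \exbar(\omega) \cap [r_\gamma, f_\gamma)$ for some $\omega \gpi \gamma$; from enforcement $r_\omega \leq \rbar^\enf_\omega \leq t'$. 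If $f_\omega < f_\gamma$ the induction hypothesis gives $\fbar_\omega \leq f_\omega$, whence $t' < \fbar_\omega \leq f_\omega$; otherwise $f_\omega \geq f_\gamma > t'$ directly. Either way $t' \in [r_\omega, f_\omega)$, which by Lemma~\ref{lem:rel_enf_rewrite_W} is contained in $\bigcup_{\omega' \gpi \gamma} \ex(\omega')$, delivering the desired inclusion.

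The step I expect to be the main obstacle is avoiding circular reasoning: when $\omega \gpi \gamma$ contributes to $\Wbar_\gamma(r_\gamma, f_\gamma)$ but happens to have $f_\omega \geq f_\gamma$, the induction hypothesis cannot be invoked on $\omega$. The key observation that resolves this is that enforcement itself gives $\rbar^\enf_\omega \geq r_\omega$ unconditionally, so $t'$ lies in $[r_\omega, f_\omega)$ purely from the trivial bound $t' < f_\gamma \leq f_\omega$, without needing any inductive control over $\omega$'s online behaviour. This is precisely what makes induction on the nominal quantity $f_\gamma$ work, whereas induction on the online quantity $\fbar_\gamma$ would require ruling out $\fbar_\omega \geq \fbar_\gamma$ by a separate scheduling argument.
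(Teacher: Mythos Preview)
Your proof is correct and follows essentially the same approach as the paper: strong induction on the nominal finishing time, jointly establishing $\rbar^\enf_\gamma = r_\gamma$ and $\fbar_\gamma \leq f_\gamma$, with the key inequality $\Wbar_\gamma(r_\gamma, f_\gamma) \leq W_\gamma(r_\gamma, f_\gamma)$ obtained via the same interval inclusion and the same case split on whether $f_\omega < f_\gamma$ (use induction) or $f_\omega \geq f_\gamma$ (use the trivial bound). The only cosmetic differences are that you argue (ii) directly rather than by contradiction, and you verify the pointwise inclusion on $\exbar(\omega)$ before invoking Lemma~\ref{lem:rel_enf_rewrite_W}, whereas the paper first rewrites both $W_\gamma$ and $\Wbar_\gamma$ via that lemma and then compares the resulting intervals.
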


\begin{proof}
	Let $\Seg = (\gamma_0, \gamma_1, \dots)$ denote the list of all segments $\Cbb$ ordered by their finishing time in the nominal schedule, i.e., 
	$f_{\gamma_0} < f_{\gamma_1} < \dots$ holds.
	We consider the online schedule $\Scalbar$ obtained under segment release time enforcement.
	By induction over the segments in $\Seg$ we show that for each $\gamma_n$, $n=0,1,\dots$:
	\begin{itemize}
		\item[(i)] $\gamma_n$ is released at the same time in the online and in the nominal schedule, i.e., $\rbar^\enf_{\gamma_n} = r_{\gamma_n}$.
		\item[(ii)] The finishing time of $\gamma_n$ in the online schedule is no later than in the nominal schedule, i.e., $\fbar_{\gamma_n} \leq f_{\gamma_n}$.
	\end{itemize}
	
	\textbf{Base case (n=0):}
	Since $\gamma_0$ has the earliest finishing time in the nominal schedule, it must be the first segment of some job.
	By Definition~\ref{def:release_enforcement}, $\rbar^\enf_{\gamma_0} = \max(r_{\mJbb(\gamma_0)} + \Sbar_{\gamma_0} , r_{\gamma_0})$.
	Since $\Sbar_{\gamma_0} \leq \Jcal_{\mTbb(\mJbb(\gamma_0))}$ holds, we have $\rbar^\enf_{\gamma_0} = r_{\gamma_0}$, which implies that (i) holds.
	
	We prove (ii) by contradiction.
	By Observation~\ref{obs:finish_online_srte}, we know that $\fbar_{\gamma_0}$ is the smallest $t \in \Rbb$ such that $t\geq \rbar^\enf_{\gamma_0} + \Wbar_{\gamma_0}(\rbar^\enf_{\gamma_0}, t) + \Cbar_{\gamma} \overset{(i)}{=} r_{\gamma_0} + \Wbar_{\gamma_0}(r_{\gamma_0}, f_{\gamma_0}) + \Cbar_{\gamma}$.
	Assume that $f_{\gamma_0}<\fbar_{\gamma_0}$, then we have 
	\begin{equation}
		f_{\gamma_0} < r_{\gamma_0} + \Wbar_{\gamma_0}(r_{\gamma_0}, f_{\gamma_0}) + \Cbar_{\gamma_0}.
	\end{equation}	
	By the enforcement mechanism, $\rbar^\enf_{\omega} \geq r_\omega$ holds for all segments $\omega \in \Cbb$.
	Moreover, since $\gamma_0$ is the first segment in $\Seg$, it has the lowest finishing time in $\Scal$.
	Hence, 
	$[\rbar^\enf_\omega, \fbar_\omega) \cap [r_{\gamma_0}, f_{\gamma_0})
	\subseteq 
	[r_\omega, \fbar_\omega) \cap [r_{\gamma_0}, f_{\gamma_0})
	\subseteq 
	[r_\omega, f_{\gamma_0}) \cap [r_{\gamma_0}, f_{\gamma_0})
	\subseteq
	[r_\omega, f_\omega) \cap [r_{\gamma_0}, f_{\gamma_0})$.
	We obtain that
	\begin{equation}
		\bigcup_{\omega \gpi \gamma_0} [\rbar^\enf_\omega, \fbar_\omega) \cap [ r_{\gamma_0}, f_{\gamma_0})
		\subseteq 
		\bigcup_{\omega \gpi \gamma_0} [r_\omega, f_\omega) \cap [ r_{\gamma_0}, f_{\gamma_0})
	\end{equation}
	holds as well.
	Hence, $\Wbar_{\gamma_0}(r_{\gamma_0}, f_{\gamma_0}) \leq W_{\gamma_0}(r_{\gamma_0}, f_{\gamma_0})$ holds.
	We use that to obtain 
	\begin{align}
		f_{\gamma_0} 
		&< r_{\gamma_0} + \Wbar_{\gamma_0}(r_{\gamma_0}, f_{\gamma_0}) + \Cbar_{\gamma_0} \nonumber
		\\& \leq r_{\gamma_0} + W_{\gamma_0}(r_{\gamma_0}, f_{\gamma_0}) + \Cbar_{\gamma_0} \nonumber
		\\& \leq r_{\gamma_0} + W_{\gamma_0}(r_{\gamma_0}, f_{\gamma_0}) + C_{\gamma_0}. \nonumber
	\end{align}
	Since $r_{\gamma_0} + W_{\gamma_0}(r_{\gamma_0}, f_{\gamma_0}) + C_{\gamma_0} \leq f_{\gamma_0}$ holds by Observation~\ref{obs:finish_nominal_rel}, we obtain a contradiction. 
	This proves (ii).
	
	\textbf{Induction Step ($n -1 \mapsto n$):}
	We assume that (i) and (ii) hold for all segments in $\Seg_n := (\gamma_0, \gamma_1, \dots, \gamma_{n-1})$.
	In the following we show that (i) and (ii) hold for $\gamma_n$.
	
	For (i), if $\gamma_n$ is the first segment in its job $\mJbb(\gamma_n)$, then $\gamma_n$ is released at time $\rbar^\enf_{\gamma_n} = r_{\mJbb(\gamma_n)} = r_{\gamma_n}$ similar to the base case.
	If $\gamma_n$ is not the first segment of $\mJbb(\gamma_n)$, then denote by $\omega$ the segment of $\mJbb(\gamma_n)$ prior to $\gamma_n$.
	By definition $\rbar^\enf_{\gamma_n} \geq r_{\gamma_n}$. 
	$\rbar^\enf_{\gamma_n} > r_{\gamma_n}$ is only possible if $\fbar_\omega > f_{\omega}$.
	However, $\omega \in \Seg_n$ since $\omega$ finishes before $\gamma_n$.
	By induction, $\fbar_\omega > f_{\omega}$ is not possible.
	We conclude $\rbar^\enf_{\gamma_n} = r_{\gamma_n}$. 
	This proves (i).
	
	As in the base case we prove (ii) by contradiction and assume that $f_{\gamma_n} < \fbar_{\gamma_n}$.
	Similar to the base case, we obtain
	\begin{equation}
		f_{\gamma_n} < r_{\gamma_n} + \Wbar_{\gamma_n}(r_{\gamma_n}, f_{\gamma_n}) + \Cbar_{\gamma_n}.
	\end{equation}
	By the enforcement mechanism, 
	for all segments $\omega \in \Cbb$, $\rbar^\enf_\omega \geq r_\omega$ holds.
	Hence, 
	\begin{equation}\label{eq:rel_enf_ind_step_1}
		[\rbar^\enf_\omega, \fbar_\omega) \cap [r_{\gamma_n}, f_{\gamma_n}) \subseteq [r_\omega, \fbar_\omega) \cap [r_{\gamma_n}, f_{\gamma_n}).
	\end{equation}
	As in the base case, in the following we show that
	\begin{equation}\label{eq:rel_enf_ind_step_2}
		[r_\omega, \fbar_\omega) \cap [r_{\gamma_n}, f_{\gamma_n}) \subseteq [r_\omega, f_\omega) \cap [r_{\gamma_n}, f_{\gamma_n}).
	\end{equation}
	However, for the induction step we distinguish two cases:
	\textbf{If $f_\omega < f_{\gamma_n}$}, then we know that $\omega$ is in $\Seg_n$
	, and by induction $\fbar_\omega \leq f_\omega$ holds.
	Therefore, $[r_\omega, \fbar_\omega) \cap [r_{\gamma_n}, f_{\gamma_n}) \subseteq [r_\omega, f_\omega) \cap [r_{\gamma_n}, f_{\gamma_n})$.
	\textbf{If $f_\omega \geq f_{\gamma_n}$}, then 
	$[r_\omega, \fbar_\omega) \cap [r_{\gamma_n}, f_{\gamma_n}) 
	\subseteq [r_\omega, f_{\gamma_n}) \cap [r_{\gamma_n}, f_{\gamma_n})
	= [r_\omega, f_\omega) \cap [r_{\gamma_n}, f_{\gamma_n})$ holds as well.
	By applying Equations~\eqref{eq:rel_enf_ind_step_1} and~\eqref{eq:rel_enf_ind_step_2}, we obtain that
	\begin{equation}
		\bigcup_{\omega \gpi \gamma_n} [\rbar^\enf_\omega, \fbar_\omega) \cap [ r_{\gamma_n}, f_{\gamma_n})
		\subseteq 
		\bigcup_{\omega \gpi \gamma_n} [r_\omega, f_\omega) \cap [ r_{\gamma_n}, f_{\gamma_n}).
	\end{equation}
	Hence, $\Wbar_{\gamma_n}(r_{\gamma_n}, f_{\gamma_n}) \leq W_{\gamma_n}(r_{\gamma_n}, f_{\gamma_n})$ holds.
	Similar to the base case, we use that to obtain 
	\begin{align}
		f_{\gamma_n} 
		&< r_{\gamma_n} + \Wbar_{\gamma_n}(r_{\gamma_n}, f_{\gamma_n}) + \Cbar_{\gamma_n} \nonumber
		\\& \leq r_{\gamma_n} + W_{\gamma_n}(r_{\gamma_n}, f_{\gamma_n}) + \Cbar_{\gamma_n} \nonumber
		\\& \leq r_{\gamma_n} + W_{\gamma_n}(r_{\gamma_n}, f_{\gamma_n}) + C_{\gamma_n}. \nonumber
	\end{align}
	Since $r_{\gamma_n} + W_{\gamma_n}(r_{\gamma_n}, f_{\gamma_n}) + C_{\gamma_n} \leq f_{\gamma_n}$ holds by Observation~\ref{obs:finish_nominal_rel}, we obtain a contradiction, and (ii) is proven.
	This concludes the induction step and therefore proves the theorem.
\end{proof}

\subsection{Treatment 2: Modifying the Segment Priority}\label{sec:total_order}

\begin{figure*}[tb]
	\centering	
	\begin{subfigure}{.33\textwidth}
		\centering
		\begin{tikzpicture}[yscale=0.4, xscale=0.3]
			\begin{scope}[shift={(0,2)}] 
				\taskname{$\tau_1$}
				
				\timeline{0}{12}{}
				
				\releases{0,10}
				\deadlines{10}
				
				\exec{0}{3}		
				\exec{5}{7}

				\susp{3}{5}		
			\end{scope}
			
			\begin{scope}[shift={(0,0)}] 
				\taskname{$\tau_2$}
				
				\timeline{0}{12}{}
				\labelling{0}{10}{2}{0}
				
				\releases{0}
				\deadlines{10}
				
				\exec{3}{5}
				\exec{7}{9}

				\susp{5}{6}
			\end{scope}
		\end{tikzpicture}
		\label{fig:poor_performance_nominal}
		\caption{T-FP schedule (nominal)}
	\end{subfigure}%
	\begin{subfigure}{.33\textwidth}
	\centering
	\vspace{1em}
	\begin{tikzpicture}[yscale=0.4, xscale=0.3]
	\begin{scope}[shift={(0,2)}] 
		\taskname{$\tau_1$}
		
		\timeline{0}{12}{}
		
		\releases{0,10}
		\deadlines{10}
		
		\exec{0}{1}		
		\exec{5}{6}	
		
		\draw[->, red] (5,0) -- (5,1.5);
		
		\susp{1}{2}		
	\end{scope}
	
	\begin{scope}[shift={(0,0)}] 
		\taskname{$\tau_2$}
		
		\timeline{0}{12}{}
		\labelling{0}{10}{2}{0}
		
		\releases{0}
		\deadlines{10}
		
		\exec{1}{2}
		\exec{6}{7}		

		\draw[->, red] (6,0) -- (6,1.5);
		
		\susp{2}{2.5}
	\end{scope}
	\end{tikzpicture}
	\label{fig:poor_performance_a}
	\caption{Segment release time enforcement (online)}
	\end{subfigure}%
	\begin{subfigure}{.33\textwidth}
		\centering
		\begin{tikzpicture}[yscale=0.4, xscale=0.3]
		\begin{scope}[shift={(0,2)}] 
			\taskname{$\tau_1$}
			
			\timeline{0}{12}{}
			
			\releases{0,10}
			\deadlines{10}
			
			\exec{0}{1}
			\exec{2}{3}		
			
			\susp{1}{2}		
		\end{scope}
		
		\begin{scope}[shift={(0,0)}] 
			\taskname{$\tau_2$}
			
			\timeline{0}{12}{}
			\labelling{0}{10}{2}{0}
			
			\releases{0}
			\deadlines{10}
			
			\exec{1}{2}			
			\exec{3}{4}		
			
			\susp{2}{2.5}
		\end{scope}
		\end{tikzpicture}
		\label{fig:poor_performance_b}
		\caption{No treatment (online)}
		\end{subfigure}
	\caption{Schedules of the task set $\Tbb= \setof{\tau_1, \tau_2}$ under task-level fixed-priority scheduling where $\tau_1$ has higher priority than $\tau_2$.  Enforcing the segment release time leads to a longer per-job response time, compared to the schedule without treatment.}
	\label{fig:poor_performance}	
\end{figure*}
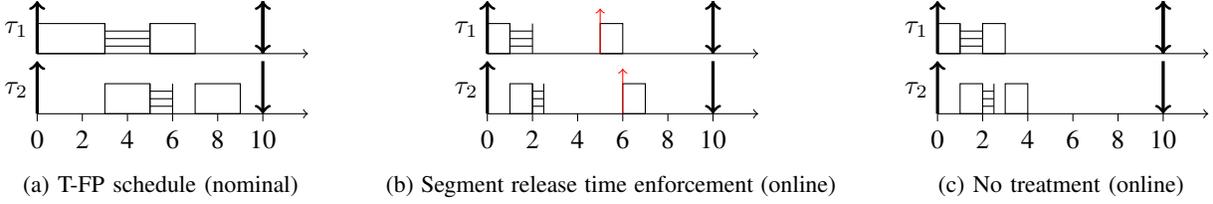

In Section~\ref{sec:enforcement}, we eliminate the timing anomalies and ensure that a feasible segmented self-suspending task set remains schedulable even with release jitter in online scheduling by enforcing the release time of the segments.
Although delaying the segment release has no negative impact on the worst-case behavior, 
this treatment may lead to poor average-case performance.
To demonstrate this, consider the case shown in Figure~\ref{fig:poor_performance}, where the WCETs and the maximum suspension times of segments are significantly larger than the actual execution and suspension times.
Figure~\ref{fig:poor_performance}~(a) demonstrates the nominal schedule, Figure~\ref{fig:poor_performance}~(b) shows the schedule with enforcement, and Figure~\ref{fig:poor_performance}~(c) shows the schedules without enforcement.
Compared to the schedule without enforcement, {enforcing the starting time} leads to a longer per-job response time, i.e., the time elapsed between the release and the completion of a job.
Therefore, a treatment without artificially delaying segments and without timing anomalies at the same time is desirable.

To that end we propose a treatment that modifies the segment priorities to eliminate timing anomalies.
In particular, we redefine the segment priorities according to their \textbf{finishing time} in the nominal schedule.
The rationale is that a segment with later finishing time should not be able to interfere a segment with an earlier finishing time. 
To distinguish original segment priorities and modified segment priorities, we call the modified priorities \emph{preference} instead.
More specifically, we say that a segment $\gamma \in \Cbb$ has a higher preference than $\omega \in \Cbb$
if $\gamma$ finishes earlier than $\omega$ in the nominal schedule.
We denote the preference as:
\begin{equation}
	\gamma >_P \omega \quad \ratio\Leftrightarrow \quad f_\gamma < f_\omega
\end{equation}
This leads to a total ordering of all segments in $\Cbb$.
In the online schedule, segments with higher preference are scheduled first.

For example, consider the system of Figure~\ref{fig:poor_performance}.
We denote by $\gamma^1_1$ and $\gamma^1_2$ the two segments of the first job of task $\tau_1$, and we denote by $\gamma^2_1$ and $\gamma^2_2$ the two segments of the first job of task $\tau_2$.
The total preference ordering is:
\begin{equation}
	\gamma^1_1 >_P \gamma^2_1 >_P \gamma^1_2 >_P \gamma^2_2
\end{equation}
Therefore, under the treatment with modified segment priorities, the online schedule will look exactly like the schedule without treatment in Figure~\ref{fig:poor_performance} (c).

Since in the online schedule, the segment priority is replaced with the segment preference, we can observe the following.

\begin{obs}\label{obs:finish_preference_rel}
	Let $\gamma \in \Cbb$ be a segment.
	The segment $\gamma$ finishes in the online schedule with priority modification at the lowest $t \in\Rbb$ such that
	\begin{equation}
		t \geq \rbar_\gamma + \Wbar_\gamma(\rbar_\gamma, t) + \Cbar_\gamma,
	\end{equation}
	where $\Wbar_\gamma(\rbar_\gamma, t)$ is the total amount of time that higher preference segments are executed during the interval $[\rbar_\gamma,t)$ in $\Scalbar$, i.e., 
	\begin{equation}
		\Wbar_\gamma(\rbar_\gamma,t) := \mu\left( \bigcup_{\omega >_P \gamma \in \Cbb} \exbar(\omega) \cap [\rbar_\gamma, t) \right).
	\end{equation}
\end{obs}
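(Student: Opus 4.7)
The plan is to mirror the reasoning that yields Observation~\ref{obs:finish_nominal_rel}, replacing the priority relation $\gpi$ by the preference relation $>_P$ and the nominal quantities $r_\gamma$, $\ex$, $C_\gamma$, $W_\gamma$ by their online counterparts $\rbar_\gamma$, $\exbar$, $\Cbar_\gamma$, $\Wbar_\gamma$. The crucial property I would invoke is that, under segment priority modification, the scheduler remains \emph{work-conserving on the segment level with respect to $>_P$}: at every instant where some segment is ready, the executing segment (if any) has strictly higher preference than any other ready segment. This is exactly the definition of the S-FP scheduler applied to the preference ordering rather than the original priority.

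First I would verify that $\fbar_\gamma$ satisfies the displayed inequality. Throughout the interval $[\rbar_\gamma, \fbar_\gamma)$ the segment $\gamma$ is ready in the sense of Definition~\ref{def:ready_online}: it has been released at $\rbar_\gamma$, and its accumulated execution is strictly less than $\Cbar_\gamma$ because it has not yet finished. Work-conservation therefore forces the processor to be busy throughout this interval, with either $\gamma$ itself executing or some $\omega >_P \gamma$ executing; no segment of lower preference than $\gamma$ can run while $\gamma$ is ready. Partitioning the interval into these two classes of execution and applying the Lebesgue measure gives
\begin{equation}
	\fbar_\gamma - \rbar_\gamma = \Cbar_\gamma + \Wbar_\gamma(\rbar_\gamma, \fbar_\gamma), \nonumber
\end{equation}
which is precisely the required inequality with equality.

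Next I would rule out any smaller $t$. For $t \in [\rbar_\gamma, \fbar_\gamma)$ the same partition applies, but the time $\gamma$ has executed within $[\rbar_\gamma, t)$ is strictly less than $\Cbar_\gamma$, so $t - \rbar_\gamma < \Cbar_\gamma + \Wbar_\gamma(\rbar_\gamma, t)$ and the inequality fails. For $t < \rbar_\gamma$ the interval $[\rbar_\gamma, t)$ is empty, $\Wbar_\gamma(\rbar_\gamma, t) = 0$, and $\rbar_\gamma + \Cbar_\gamma > t$ is immediate because $\Cbar_\gamma > 0$. Hence $\fbar_\gamma$ is indeed the minimum value of $t$ satisfying the inequality.

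I do not foresee a substantial obstacle. The main subtlety is to make precise that, on $[\rbar_\gamma, \fbar_\gamma)$, only segments of higher preference than $\gamma$ can preempt or delay $\gamma$; once this is stated, the rest is the bookkeeping identity that underlies Observation~\ref{obs:finish_nominal_rel}, merely rewritten with online quantities and the preference order in place of the priority order.
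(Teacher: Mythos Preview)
Your proposal is correct and follows exactly the reasoning the paper intends: the paper states this result as an ``Observation'' without proof, simply noting that since the online scheduler replaces segment priority by segment preference, the same work-conservation argument underlying Observation~\ref{obs:finish_nominal_rel} applies verbatim with $>_P$ in place of $\gpi$ and online quantities in place of nominal ones. You have merely spelled out that argument in full.
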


To prove that timing anomalies do not occur, we need to show that the interference from higher preference segments in the online schedule is not higher than the interference from higher priority segments in the nominal schedule.
To achieve this, we first prove the following key ingredient.

\begin{lemma}\label{lem:preference_key_ingredient}
	For all segments $\gamma \in \Cbb$, the interference in the nominal schedule is lower bounded by:
	\begin{equation}
		W_\gamma(s_\gamma, f_\gamma) \geq \sum_{\substack{\omega \in \Cbb\\ s_\gamma < f_\omega < f_\gamma }} C_\omega
	\end{equation}
\end{lemma}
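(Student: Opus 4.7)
The plan is to bound $W_\gamma(s_\gamma, f_\gamma)$ from below by showing that every segment $\omega$ appearing in the right-hand sum contributes its entire WCET $C_\omega$ to the interval $[s_\gamma, f_\gamma)$. The two core ingredients are the work-conserving property of S-FP scheduling (so only $\gamma$ or segments with priority strictly higher than $\gamma$ can execute while $\gamma$ is ready) and the uniprocessor disjointness of the sets $\ex(\omega)$ (so Lebesgue measures of unions split into sums).

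First I would fix any $\omega \in \Cbb$ with $s_\gamma < f_\omega < f_\gamma$ and establish two things. (a) $\omega \gpi \gamma$: since $f_\omega < f_\gamma$ we have $\omega \neq \gamma$, and $\omega$ executes at instants arbitrarily close to $f_\omega \in (s_\gamma, f_\gamma)$; work-conservation combined with the fact that $\gamma$ is ready throughout $[s_\gamma, f_\gamma)$ forces $\omega \gpi \gamma$. (b) $\ex(\omega) \subseteq [s_\gamma, f_\gamma)$: assume for contradiction that $\omega$ is executed at some instant $t < s_\gamma$. Then $r_\omega \leq t < s_\gamma$, and because $f_\omega > s_\gamma$, segment $\omega$ still has remaining workload at time $s_\gamma$, hence is ready at $s_\gamma$ by Definition~\ref{def:ready}. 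Together with $\omega \gpi \gamma$, the work-conserving S-FP scheduler would execute $\omega$ rather than $\gamma$ at time $s_\gamma$, contradicting the fact that $s_\gamma$ is the starting time of $\gamma$. Consequently $\ex(\omega) \subseteq [r_\omega, f_\omega) \subseteq [s_\gamma, f_\gamma)$ and $\mu(\ex(\omega) \cap [s_\gamma, f_\gamma)) = C_\omega$.

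Combining (a) and (b) with uniprocessor disjointness yields
\begin{equation*}
W_\gamma(s_\gamma, f_\gamma) = \sum_{\omega \gpi \gamma} \mu\bigl(\ex(\omega) \cap [s_\gamma, f_\gamma)\bigr) \geq \sum_{\substack{\omega \in \Cbb \\ s_\gamma < f_\omega < f_\gamma}} C_\omega,
\end{equation*}
which is the desired inequality. The main obstacle is the boundary argument in step (b): one has to rule out the scenario in which a higher-priority $\omega$ with $f_\omega > s_\gamma$ accrues partial execution strictly before $s_\gamma$. This requires carefully unfolding the definition of ``ready'' at the instant $s_\gamma$ and invoking work-conservation to reach a contradiction; once this is handled, the rest is accounting via the uniprocessor disjointness of the $\ex(\omega)$.
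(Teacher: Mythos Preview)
Your proposal is correct and follows essentially the same approach as the paper: both arguments show that every $\omega$ with $s_\gamma < f_\omega < f_\gamma$ must have higher priority than $\gamma$ (because it executes inside $[s_\gamma,f_\gamma)$ while $\gamma$ is ready) and must satisfy $\ex(\omega)\subseteq[s_\gamma,f_\gamma)$ (because a higher-priority segment with pending work at $s_\gamma$ would block $\gamma$ from starting), and then sum the disjoint contributions $C_\omega$. The only cosmetic difference is that the paper packages the argument as a set inclusion $\bigcup_{\omega\gpi\gamma}\ex(\omega)\cap[s_\gamma,f_\gamma)\supseteq\bigcup_{s_\gamma<f_\omega<f_\gamma}\ex(\omega)$ before applying~$\mu$, whereas you work segment-by-segment; the substance is identical.
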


\begin{proof}
	We know $W_\gamma(s_\gamma, f_\gamma) = \mu\left( \bigcup_{\omega \gpi \gamma} \ex(\omega) \cap [s_\gamma, f_\gamma) \right)$, by definition of $W_\gamma$ in Equation~\eqref{eq:def_W_nominal}.
	We first show that:
	\begin{equation}\label{eq:lem:preference_key_ingredient}
		\bigcup_{\omega \gpi \gamma} \ex(\omega) \cap [s_\gamma, f_\gamma) 
		\supseteq \bigcup_{\substack{\omega \in \Cbb\\ s_\gamma < f_\omega < f_\gamma }} \ex(\omega)
	\end{equation}
	To that end, consider an arbitrary $t \in \bigcup_{\substack{\omega \in \Cbb\\ s_\gamma < f_\omega < f_\gamma }} \ex(\omega)$.
	Then there exists $\omega \in \Cbb$ with $s_\gamma < f_\omega < f_\gamma$ such that $t \in \ex(\omega)$.
	Since $s_\gamma < f_\omega < f_\gamma$, we know that $\omega \neq \gamma$ and that $\omega$ is executed during $[s_\gamma, f_\gamma)$.
	Since no segment with lower priority than $\gamma$ can be executed during $[s_\gamma, f_\gamma)$, $\omega$ must have higher priority than $\gamma$, i.e., $\omega \gpi \gamma$.
	We know that every higher priority segment that is executed before $s_\gamma$ must finish before $\gamma$ can start. 
	Therefore, $\omega$ cannot be executed before $s_\gamma$ and $s_\omega \geq s_\gamma$ holds.
	Since $s_\gamma \leq s_\omega$ and $f_\omega < f_\gamma$, we know that $\ex(\omega) \subseteq [s_\gamma, f_\gamma)$.
	Hence, $t\in \ex(\omega) \cap [s_\gamma, f_\gamma)$.
	In conclusion, we have shown that there exists some $\omega \in \Cbb$ with $\omega \gpi \gamma$ such that $t\in \ex(\omega) \cap [s_\gamma, f_\gamma)$.
	This shows that $t \in \bigcup_{\omega \gpi \gamma} \ex(\omega) \cap [s_\gamma, f_\gamma)$, which proves Equation~\eqref{eq:lem:preference_key_ingredient}.

	By applying $\mu$ on both sides of Equation~\eqref{eq:lem:preference_key_ingredient}, we obtain that $W_\gamma(s_\gamma, f_\gamma) \geq \mu \left( \bigcup_{\substack{\omega \in \Cbb\\ s_\gamma < f_\omega < f_\gamma }} \ex(\omega) \right)$.
	Since only one segment can be executed at the same time, $\bigcup_{\substack{\omega \in \Cbb\\ s_\gamma < f_\omega < f_\gamma }} \ex(\omega)$ is a disjoint union, and we can apply $\mu$ on each $\ex(\omega)$ individually. 
	We obtain 
	$W_\gamma(s_\gamma, f_\gamma) \geq  \sum_{\substack{\omega \in \Cbb\\ s_\gamma < f_\omega < f_\gamma }} \mu(\ex(\omega))$ which proves this lemma.
\end{proof}

The previous lemma allows us to prove that no timing anomalies occur with the treatment that modifies segment priorities, as formulated in the following theorem.

\begin{thm}\label{thm:total_order}
	The finishing time of each segment in the online schedule $\Scalbar$ with segment preference instead of segment priorities is \emph{no larger} than the finishing time in the nominal schedule $\Scal$, i.e., $f_\gamma \geq \fbar_\gamma$ for all $\gamma \in \Cbb$.
\end{thm}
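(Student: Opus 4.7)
The proof plan is to mirror the induction structure used for Theorem~\ref{thm:enforcement}: order all segments $\Cbb$ as $\Seg = (\gamma_0, \gamma_1, \dots)$ with strictly increasing nominal finishing times, and prove by strong induction over $n$ that $\fbar_{\gamma_n} \leq f_{\gamma_n}$, arguing the induction step by contradiction.

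As an auxiliary fact I would first establish $\rbar_{\gamma_n} \leq r_{\gamma_n}$. If $\gamma_n$ is the first segment of its job, this follows from $\Sbar_{\gamma_n} \leq \Jcal_{\mTbb(\mJbb(\gamma_n))} = S_\tau^{*}$. Otherwise the in-job predecessor $\omega$ satisfies $f_\omega < f_{\gamma_n}$ and hence appears earlier in $\Seg$, so the induction hypothesis together with $\Sbar_{\gamma_n} \leq S_{\gamma_n}$ yields $\rbar_{\gamma_n} = \fbar_\omega + \Sbar_{\gamma_n} \leq f_\omega + S_{\gamma_n} = r_{\gamma_n} \leq s_{\gamma_n}$. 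The base case $n=0$ is then immediate: $\gamma_0$ must be the first segment of some job (otherwise its in-job predecessor would finish earlier in the nominal schedule, contradicting minimality), no segment has higher preference than $\gamma_0$, so $\Wbar_{\gamma_0} \equiv 0$, and Observation~\ref{obs:finish_preference_rel} reduces the claim to $\fbar_{\gamma_0} = \rbar_{\gamma_0} + \Cbar_{\gamma_0} \leq r_{\gamma_0} + C_{\gamma_0} \leq f_{\gamma_0}$.

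For the induction step, assume $\fbar_{\gamma_n} > f_{\gamma_n}$. Observation~\ref{obs:finish_preference_rel} then forces
\begin{equation}
	f_{\gamma_n} < \rbar_{\gamma_n} + \Cbar_{\gamma_n} + \Wbar_{\gamma_n}(\rbar_{\gamma_n}, f_{\gamma_n}).
\end{equation}
The main obstacle is bounding $\Wbar_{\gamma_n}(\rbar_{\gamma_n}, f_{\gamma_n})$ by nominal quantities, because $\rbar_{\gamma_n}$ can be strictly smaller than $r_{\gamma_n}$, so higher-preference segments may execute online throughout $[\rbar_{\gamma_n}, r_{\gamma_n})$, a region not present in the nominal interference window. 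I would resolve this by splitting the higher-preference segments into two types: (i) $f_\omega \leq s_{\gamma_n}$, and (ii) $s_{\gamma_n} < f_\omega < f_{\gamma_n}$. For type (i) the induction hypothesis gives $\fbar_\omega \leq f_\omega \leq s_{\gamma_n}$, so every $\exbar(\omega)\cap [\rbar_{\gamma_n},f_{\gamma_n})$ lies inside $[\rbar_{\gamma_n},s_{\gamma_n})$ and their union has measure at most $s_{\gamma_n}-\rbar_{\gamma_n}$. For type (ii), each contributes at most $\Cbar_\omega \leq C_\omega$, and Lemma~\ref{lem:preference_key_ingredient} caps their sum by $W_{\gamma_n}(s_{\gamma_n},f_{\gamma_n})$. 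Adding the two bounds and using disjointness of execution sets gives
\begin{equation}
	\Wbar_{\gamma_n}(\rbar_{\gamma_n}, f_{\gamma_n}) \leq (s_{\gamma_n}-\rbar_{\gamma_n}) + W_{\gamma_n}(s_{\gamma_n}, f_{\gamma_n}).
\end{equation}
Substituting $W_{\gamma_n}(s_{\gamma_n}, f_{\gamma_n}) = f_{\gamma_n} - s_{\gamma_n} - C_{\gamma_n}$ (Observation~\ref{obs:finish_nominal_start}) together with $\Cbar_{\gamma_n} \leq C_{\gamma_n}$ collapses the right-hand side of the contradiction inequality to $f_{\gamma_n}$, yielding the strict inequality $f_{\gamma_n} < f_{\gamma_n}$, which is the sought contradiction and completes the induction.
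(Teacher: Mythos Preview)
Your proposal is correct and follows essentially the same induction-by-nominal-finishing-time structure as the paper, reaching the same key inequality $\Wbar_{\gamma_n}(\rbar_{\gamma_n}, f_{\gamma_n}) \leq (s_{\gamma_n}-\rbar_{\gamma_n}) + W_{\gamma_n}(s_{\gamma_n}, f_{\gamma_n})$ and the same contradiction via Observation~\ref{obs:finish_nominal_start}. The only cosmetic difference is how the bound is obtained: the paper splits the \emph{interval} $[\rbar_{\gamma_n},f_{\gamma_n})$ at $s_{\gamma_n}$, bounding the left part trivially by its length $s_{\gamma_n}-\rbar_{\gamma_n}$ (no induction needed there) and the right part via Lemma~\ref{lem:preference_key_ingredient} plus induction, whereas you split the \emph{segments} by whether $f_\omega \leq s_{\gamma_n}$ and invoke the induction hypothesis to push type~(i) online executions into $[\rbar_{\gamma_n},s_{\gamma_n})$; both routes are equally valid and yield the identical bound.
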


\begin{proof}
	In the proof of Theorem~\ref{thm:enforcement}, let $\Seg = (\gamma_0, \gamma_1, \dots)$ denote the list of all segments $\Cbb$ ordered by their finishing time in the nominal schedule, i.e., 
	$f_{\gamma_0} < f_{\gamma_1} < \dots$ holds.
	Please note that this ordering respects the segment preferences, i.e., $\gamma_0 >_P \gamma_1 >_P \dots$. 
	We consider the online schedule $\Scalbar$ obtained with segment preferences instead of segment priorities.
	By induction over the segments in $\Seg$ we show that for each $\gamma_n$, with $n=0,1,\dots$, the inequality $f_{\gamma_n} \geq \fbar_{\gamma_n}$ holds.
	
	\textbf{Base case ($n=0$):} We prove $f_{\gamma_0} \geq \fbar_{\gamma_0}$ by contradiction, i.e., we assume $f_{\gamma_0} < \fbar_{\gamma_0}$.
	By Observation~\ref{obs:finish_preference_rel}, we know that $\fbar_{\gamma_0}$ is the lowest $t \in \Rbb$ such that $t \geq \rbar_{\gamma_0} + \Wbar_{\gamma_0}(\rbar_{\gamma_0}, t) + \Cbar_{\gamma_0}$ holds.
	Since $f_{\gamma_0} < \fbar_{\gamma_0}$, we have 
	\begin{equation}
		f_{\gamma_0} < \rbar_{\gamma_0} + \Wbar_{\gamma_0}(\rbar_{\gamma_0}, f_{\gamma_0}) + \Cbar_{\gamma_0}.
	\end{equation}
	Since $\gamma_0$ has the earliest finishing time in $\Scal$ among all segments, it must be the first segment of its corresponding job $\mJbb(\gamma_0)$.
	The online release time of $\gamma_0$ is $\rbar_{\gamma_0} = r_{\mJbb(\gamma_0)} + \Jcal_{\gamma_0} \leq r_{\mJbb(\gamma_0)} + \Jcal_{\mTbb(\mJbb(\gamma_0))}  = r_{\gamma_0}$.
	Moreover, since $\gamma_0$ has the highest preference, $\Wbar_{\gamma_0}(\rbar_{\gamma_0}, f_{\gamma_0}) = \mu(\emptyset) = 0$.
	This leads us to	
	\begin{equation}
		\begin{split}
			f_{\gamma_0} 
			&< \rbar_{\gamma_0} + \Wbar_{\gamma_0}(\rbar_{\gamma_0}, f_{\gamma_0}) + \Cbar_{\gamma_0} \nonumber \\						
			&\leq r_{\gamma_0} + C_{\gamma_0} \nonumber
			\\& \leq r_{\gamma_0} + W_{\gamma_0}(r_{\gamma_0}, f_{\gamma_0}) + C_{\gamma_0}, \nonumber
		\end{split}
	\end{equation}
	Since $f_{\gamma_0} \geq r_{\gamma_0} + W_{\gamma_0}(r_{\gamma_0}, f_{\gamma_0}) + C_{\gamma_0}$ holds by Observation~\ref{obs:finish_nominal_rel}, we obtain a contradiction.
	This proves $f_{\gamma_0} \geq \fbar_{\gamma_0}$.
	
	\textbf{Induction Step ($n-1 \mapsto n$):}
	We assume that $\fbar_{\gamma_j} \leq f_{\gamma_j}$
	holds for all the previous segments $\gamma_j \in \Seg_n := (\gamma_0, \gamma_1, \dots, \gamma_{n-1})$.
	In the following we show that $\fbar_{\gamma_n} \leq f_{\gamma_n}$.
	To achieve this, we first prove that 
	\begin{equation}\label{eq:thm_pref_WbarleqW}
		\Wbar_{\gamma_n} ( s_{\gamma_n}, f_{\gamma_n})
		\leq W_{\gamma_n} ( s_{\gamma_n}, f_{\gamma_n}).
	\end{equation}
	By Lemma~\ref{lem:preference_key_ingredient}, we already know that 
	$W_{\gamma_n}(s_{\gamma_n}, f_{\gamma_n}) \geq \sum_{\substack{\omega \in \Cbb\\ s_{\gamma_n} < f_\omega < f_{\gamma_n} }} C_\omega$.
	Therefore, it is left to show that 
	\begin{equation}\label{eq:thm_pref_WbarleqsumC}
		\Wbar_{\gamma_n}(s_{\gamma_n}, f_{\gamma_n}) \leq \sum_{\substack{\omega \in \Cbb\\ s_{\gamma_n} < f_\omega < f_{\gamma_n} }} C_\omega.
	\end{equation}
	By definition of $\Wbar_{\gamma_n}$, we know that $\Wbar_{\gamma_n}(s_{\gamma_n}, f_{\gamma_n}) = \mu\left(\bigcup_{\omega >_P \gamma_n} \exbar(\omega) \cap [ s_{\gamma_n}, f_{\gamma_n} ) \right)$.
	In that definition, the condition $\omega >_P \gamma_n$ is equivalent to $f_\omega < f_{\gamma_n}$.
	Moreover, if $\exbar(\omega) \cap [ s_{\gamma_n}, f_{\gamma_n} ) \neq \emptyset$, then $\fbar_\omega > s_{\gamma_n}$ must hold.
	Since $\omega \in \Seg_n$, we have $f_{\omega} \geq \fbar_{\omega} > s_{\gamma_n}$ by induction.
	We obtain 
	\begin{align}
		\Wbar_{\gamma_n}(s_{\gamma_n}, f_{\gamma_n}) 
		&\leq \mu\left(\bigcup_{\substack{\omega \in \Cbb\\ \nonumber s_{\gamma_n} < f_\omega < f_{\gamma_n} }} \exbar(\omega) \cap [ s_{\gamma_n}, f_{\gamma_n} ) \right)
		\\ \nonumber & \leq \sum_{\substack{\omega \in \Cbb\\ s_{\gamma_n} < f_\omega < f_{\gamma_n} }} \Cbar_{\omega}
		\quad \leq \quad \sum_{\substack{\omega \in \Cbb\\ s_{\gamma_n} < f_\omega < f_{\gamma_n} }} C_{\omega}.
	\end{align}
	This proves Equation~\eqref{eq:thm_pref_WbarleqsumC} and therefore, also Equation~\eqref{eq:thm_pref_WbarleqW} is proven.
	
	We show $f_{\gamma_n} \geq \fbar_{\gamma_n}$ by contradiction, i.e., we assume that $f_{\gamma_n} < \fbar_{\gamma_n}$.
	By Observation~\ref{obs:finish_preference_rel}, we know that $\fbar_{\gamma_n}$ is the lowest $t \in \Rbb$ such that $t \geq \rbar_{\gamma_n} + \Wbar_{\gamma_n}(\rbar_{\gamma_n}, t) + \Cbar_{\gamma_n}$ holds.
	Since $f_{\gamma_n} < \fbar_{\gamma_n}$, we have 
	\begin{equation}
		f_{\gamma_n} < \rbar_{\gamma_n} + \Wbar_{\gamma_n}(\rbar_{\gamma_n}, f_{\gamma_n}) + \Cbar_{\gamma_n}.
	\end{equation}
	If $\gamma_n$ is the first segment of its job, then similar to the base case it is released at the release of the corresponding job and $\rbar_{\gamma_n} = r_{\gamma_n}$ holds.
	Otherwise, the segment $\xi \in \Cbb$ prior to $\gamma_n$ in its corresponding job $\mJbb(\gamma_n)$ is in $\Seg_n$.
	By induction $\fbar_\xi \leq f_\xi$, and therefore the segment $\gamma_n$ is released in $\Scalbar$ no later than in $\Scal$, i.e.,
	$\rbar_{\gamma_n} \leq r_{\gamma_n}\leq s_{\gamma_n}$. 
	We obtain
	\begin{align}
		f_{\gamma_n} 
		&< \rbar_{\gamma_n} + \Wbar_{\gamma_n}(\rbar_{\gamma_n}, f_{\gamma_n}) + \Cbar_{\gamma_n} \nonumber
		\\& \leq  \rbar_{\gamma_n} + (s_{\gamma_n} - \rbar_{\gamma_n}) + \Wbar_{\gamma_n}(s_{\gamma_n}, f_{\gamma_n}) + \Cbar_{\gamma_n} \nonumber
		\\& =s_{\gamma_n} + \Wbar_{\gamma_n}(s_{\gamma_n}, f_{\gamma_n}) + \Cbar_{\gamma_n} \nonumber
		\\& \leq s_{\gamma_n} + W_{\gamma_n}(s_{\gamma_n}, f_{\gamma_n}) + C_{\gamma_n}. \nonumber
	\end{align}
	Since $f_{\gamma_n} \geq s_{\gamma_n} + W_{\gamma_n}(s_{\gamma_n}, f_{\gamma_n}) + C_{\gamma_n}$ holds by Observation~\ref{obs:finish_nominal_start}, we obtain a contradiction.
	Hence, $f_{\gamma_n} \geq \fbar_{\gamma_n}$ holds.
	This concludes the induction step and therefore the theorem is proven. 
\end{proof}

\section{Application of Treatments}\label{sec:application}

In the previous section, the \emph{segment release time enforcement} and the \emph{segment priority modification} are introduced.
In this section we discuss how the treatments can be applied for a system of \emph{periodic}, \emph{synchronous}, segmented self-suspending real-time tasks with \emph{constrained deadlines} and release jitters.
More specifically, we assume that each task $\tau$ releases jobs according to its period $T_\tau >0$ starting at time $0$ (i.e., $\Rel_\tau = \setof{0, T_\tau, 2T_\tau, \dots}$) with a maximal release jitter $\Jcal_\tau \geq 0$,
and has a relative deadline $D_\tau \leq T_\tau$ (i.e., each job $J$ of task $\tau$ must finish until its absolute deadline $r_J + D_\tau$).

To apply the treatments, we follow a 2-step process:
\begin{itemize}
	\item \textbf{Step 1}: The \emph{nominal schedule} $\Scal$ is constructed and recorded offline, based on the maximal release jitter, the worst-case execution time and the maximum suspension time of a computation segment and a suspension interval, respectively.
	\item \textbf{Step 2}: The release times of all segments in $\Scal$ are used to enforce the online segment release times 
	for the \emph{segment release time enforcement}, and the finishing times are used to define the segment preference for the \emph{segment priority modification}.
	The online schedule $\Scalbar$ is generated according to the descriptions in Section~\ref{sec:treatments}.
\end{itemize}
Theorems~\ref{thm:enforcement} and~\ref{thm:total_order} show that under any of those treatments the finishing time of each segment in the online schedule $\Scalbar$ is upper bounded by the finishing time in the nominal schedule $\Scal$.
Therefore, the schedule with treatment is schedulable (i.e., each job finishes before its absolute deadline $\Scalbar$) if and only if the nominal schedule $\Scal$ is schedulable.

For periodic, synchronous tasks with constrained deadlines and release jitters, the nominal schedule $\Scal$ repeats every hyperperiod (i.e., the least common multiple of all task periods) if no deadline miss occurs in the first hyperperiod.
Therefore, it is sufficient to schedule only one hyperperiod and calculate the following finishing times of each segment accordingly.
Moreover, the simulation of one hyperperiod for Step 1 directly serves as an \emph{exact} schedulability test for scheduling under the treatments:
There are no deadline misses under schedule with treatment if and only if there are no deadline misses in the first hyperperiod of the nominal schedule $\Scal$.

\section{Evaluation}\label{sec:eval}

We compare the proposed treatments to state-of-the-art scheduling algorithms in terms of schedulability on synthetic task sets.
We first describe how the task sets are synthesized, and briefly introduce the comparing algorithms in Section~\ref{sec:eval_settings}.
In Section~\ref{sec:eval_schedulability}, we consider tasks without release jitter and compare the acceptance ratios of the algorithms under different task set configurations.
An approach based on a combination of scheduling algorithms is proposed to achieve a higher acceptance ratio.
In Section~\ref{sec:eval_jitter}, we demonstrate how release jitter affects the schedulability of the task sets.

The comparison of our proposed method with sporadic analyses in Section~\ref{sec:eval_schedulability} is reasoned by the absence of periodic analyses due to timing anomalies.
To the best of our knowledge, most existing researches for self-suspending tasks focus on the sporadic real-time task model.
Since the sporadic real-time task model is a \textit{behavior relaxation} against the periodic real-time task model, as demonstrated by von der Brueggen et al.~\cite{DBLP:conf/rtcsa/BruggenBCDR22} (stated in Appendix~\ref{sec:appendix_BR}), we compare our proposed treatments with the state-of-the-art approaches for the sporadic task model in the evaluation.

\subsection{Task Sets and Algorithms}\label{sec:eval_settings}

In our evaluation, we focus on segmented self-suspension periodic synchronous tasks with constrained deadlines.
The synthetic task sets were generated as follows.
First, we consider different total utilization settings of a task set, ranging from $0\%$ to $100\%$ in a $5\%$ step.
For each total utilization setting, we generated $100$ task sets, each with $10$ tasks $\setof{\tau_1, \dots, \tau_{10}}$.
Given the total utilization of a task set, we applied the Dirichlet-Rescale (DRS) algorithm~\cite{DBLP:conf/rtss/GriffinBD20} to determine the utilization $U_{\tau_i}$ of each individual task $\tau_i$.
$T_{\tau_i}$, the period of task $\tau_i$, was selected uniformly at random from a set of semi-harmonic periods $T_{\tau_i} \in \{1, 2, 5, 10, 20, 50, 100, 200, 1000\}$, which is used in automotive systems~\cite{DBLP:conf/rtns/BruggenUCF17,DBLP:conf/ecrts/HamannD0PW17,7509419}.
Each task $\tau_i$ has a relative deadline $D_{\tau} \leq T_{\tau}$.
With the utilization $U_{\tau_i}$ and period $T_{\tau_i}$, the total execution time of task $\tau_i$ was calculated accordingly, i.e., $C_{\tau_i} = U_{\tau_i} * T_{\tau_i}$.

Next, we divided the total execution time $C_{\tau_i}$ into the $M_{\tau_i}$ segments.
In our evaluation, $M_{\tau_i}$ was selected from the set $\{2 ~(\mathit{Rare}), 5~(\mathit{Moderate}), 8~(\mathit{Frequent})\}$ based on the configuration of the task set.
The number of suspension intervals was set to $M_{\tau_i} - 1$ accordingly.
The total suspension length of task $\tau_i$ was generated according to a uniform distribution in one of the following three ranges, as suggested in~\cite{WC16-suspend-DATE, vdBrueggen-RTCSA2017}:
\begin{itemize}
	\item Short suspension: $[0.01(T_{\tau_i} - C_{\tau_i}), 0.1(T_{\tau_i} - C_{\tau_i})]$
	\item Medium suspension: $[0.1(T_{\tau_i} - C_{\tau_i}), 0.3(T_{\tau_i} - C_{\tau_i})]$
	\item Long suspension: $[0.3(T_{\tau_i} - C_{\tau_i}), 0.6(T_{\tau_i} - C_{\tau_i})]$
\end{itemize}
Having the number of computation segments $M_{\tau_i}$, the total execution time $C_{\tau_i}$, and the total suspension length, we applied the DRS algorithm to determine the execution time of each computation segment and the length of each suspension interval, thus constructed the execution behavior $\Ex_{\tau_i}$ for task $\tau_i$ as defined in Section~\ref{sec:task_model}.

For task sets with release jitter, we further incorporated a maximum release jitter at the beginning of each tasks.
To determine the maximum release jitter for a task, we utilized the period of the shortest period task in a task set as the \emph{reference period}. 
We categorized release jitter severity into three levels: \textit{Minor}, \textit{Mild}, and \textit{Serious}, with the respective ranges specified as follows:
\begin{itemize}
	\item \textit{Minor}: $[1, 10\%]$ of the reference period
	\item \textit{Mild}: $[10, 20\%]$ of the reference period
	\item \textit{Serious}: $[20, 30\%]$ of the reference period
\end{itemize}
Based on the severity level, we generated the maximum release jitter for each task from a uniform distribution within the specified range.
With this approach, the maximum release jitter for tasks within a task set may vary, but are bounded within the specified range.

We considered the following segmented self-suspending scheduling algorithms:\footnote{The evaluation framework for self-suspending task systems, i.e., SSSEvaluation~\cite{DBLP:conf/rtss/GunzelTCBC21}, is applied for evaluating SCAIR-OPA, SCAIR-RM, and EDAGMF-OPA. The framework is available at~\url{https://github.com/tu-dortmund-ls12-rt/SSSEvaluation}.}
\begin{itemize}
	\item \textbf{NOM-EDF}: Our approach, the nominal schedules are generated using EDF scheduling.
	\item \textbf{NOM-RM}: Our approach, the nominal schedules are generated using RM scheduling.
	\item \textbf{SCAIR-OPA}~\cite{DBLP:conf/rtcsa/SchonbergerHBCC18}: A pseudo-polynomial time schedulability test under Audsley’s Optimal Priority assignment~\cite{Audsley01}.
	\item \textbf{SCAIR-RM}~\cite{DBLP:conf/rtcsa/SchonbergerHBCC18}: A pseudo-polynomial time schedulability test under RM priority assignment.
	\item \textbf{EDAGMF-OPA}~\cite{WC16-suspend-DATE}: A fixed-priority equal deadline assignment scheduling with Audsley’s Optimal Priority assignment.
\end{itemize}
Recall that the proposed treatments depend on information such as the nominal release times of segments and the total preference order in a nominal schedule.
\textbf{NOM-EDF} and \textbf{NOM-RM} generate a nominal schedule by simulating the execution of the given task set based on the WCET and maximum suspension time of the segments using EDF / RM scheduling, respectively.
As discussed in Section~\ref{sec:application}, our treatments derive a feasible schedule whenever the nominal schedule simulated over one hyperperiod is feasible.
Note that although we focus on synchronous periodic tasks to ensure the schedulability of the nominal schedule in our evaluation, \textbf{NOM-EDF} and \textbf{NOM-RM} can work on any task set with a repetitive release pattern, e.g., periodic tasks with different offsets, as long as the nominal schedule repeats.

\subsection{Schedulability under Different Task Set Configurations}\label{sec:eval_schedulability}

\begin{figure}[tbp]
	\centering	
	\includegraphics[width=0.95\linewidth]{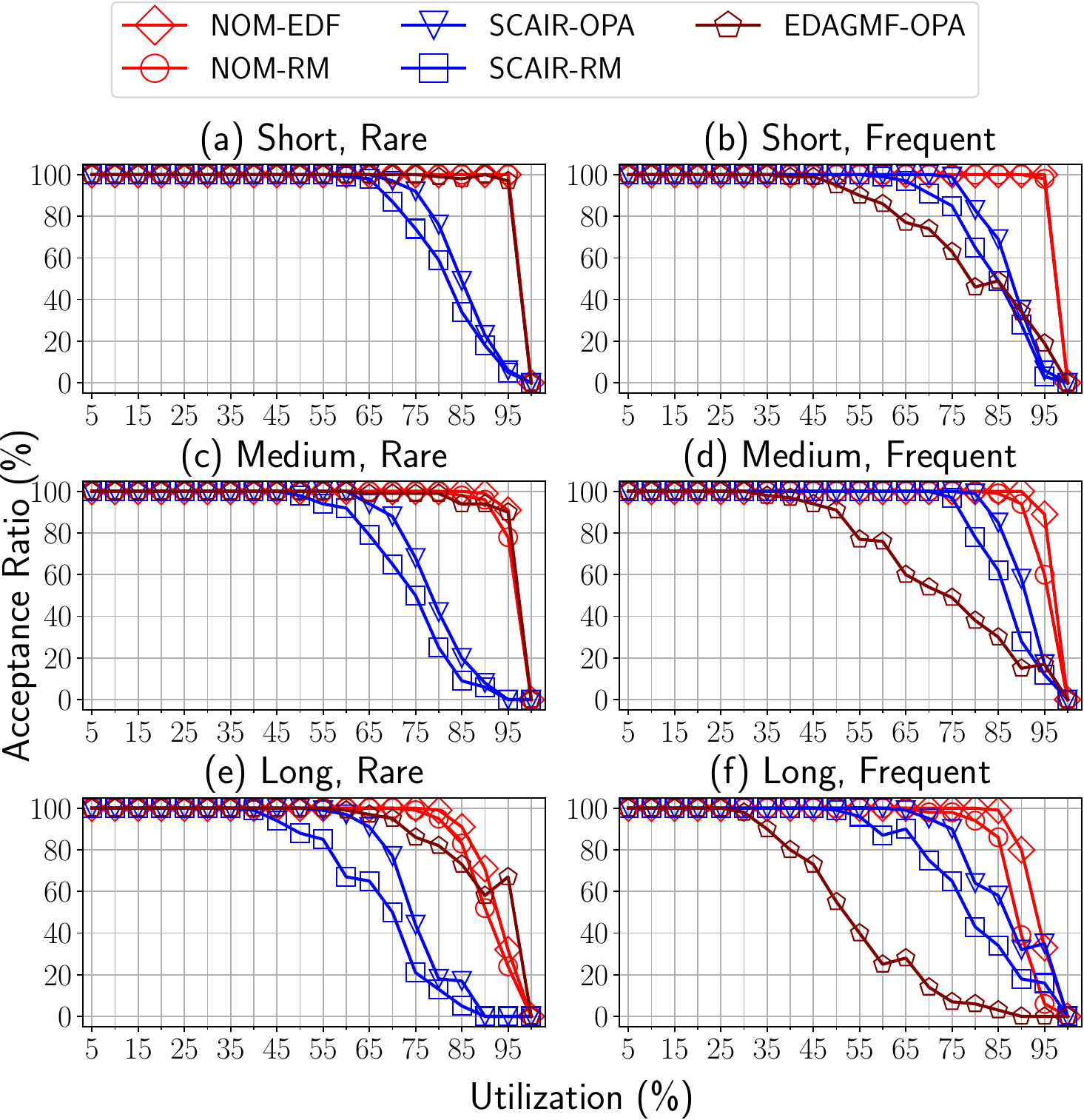}
	\caption{Acceptance ratio of the approaches under different task set configurations.  The results for the \emph{Moderate} task sets are provided in the Appendix.}
	\label{fig:sched_results_partial}	
\end{figure}

\begin{figure}[tp]
	\centering	
	\includegraphics[width=0.9\linewidth]{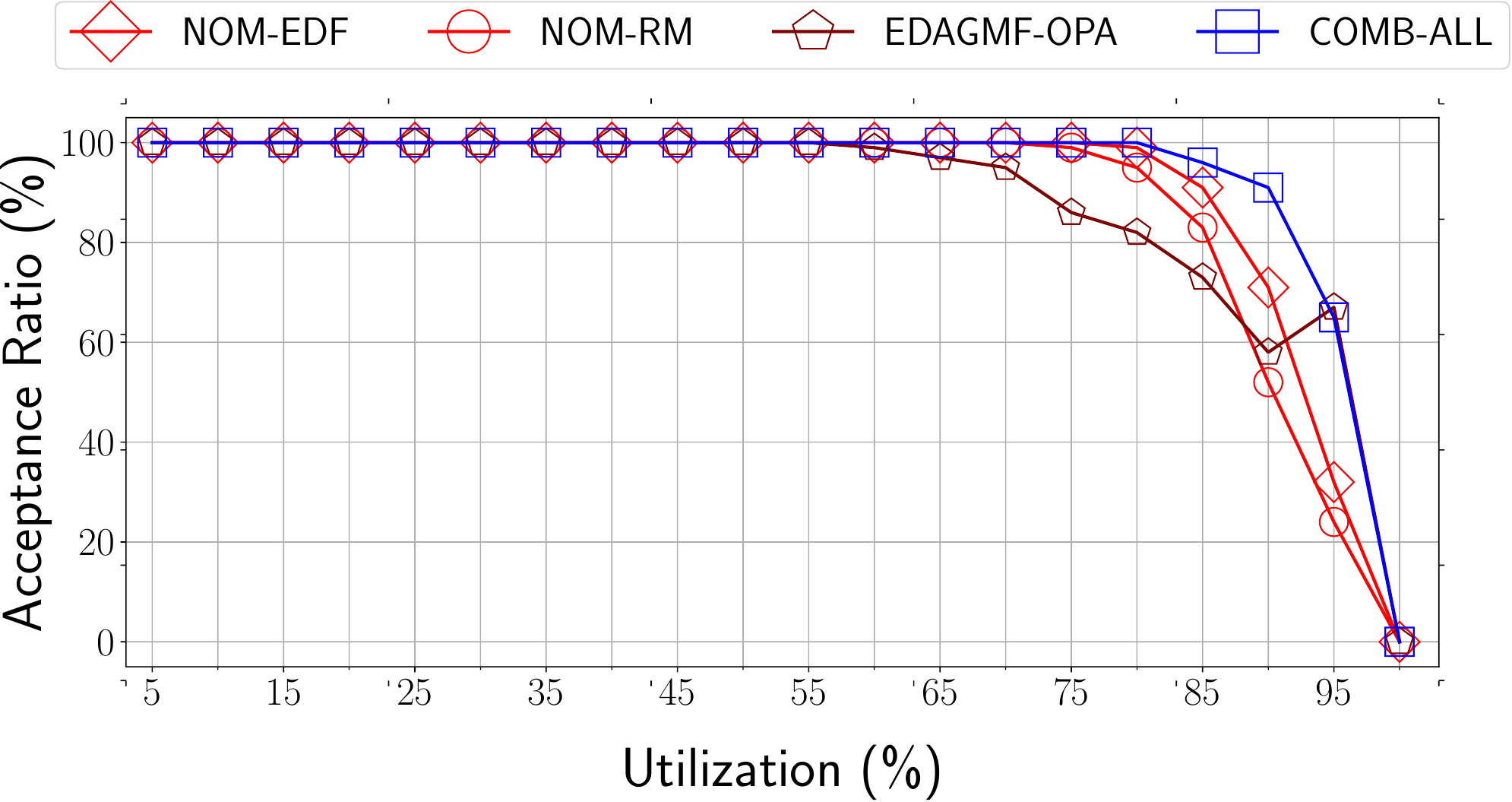}
	\caption{Acceptance ratios under the \textit{Long, Rare} configuration while considering the \textbf{COMB-ALL} approach.}
	\label{fig:sched_results_combine}	
\end{figure}

We compare the \emph{acceptance ratio} between the scheduling algorithms mentioned in Section~\ref{sec:eval_settings} using task sets \textbf{without} release jitter.
Figure~\ref{fig:sched_results_partial} demonstrates the results on task sets with different configurations, i.e., number of segments in a task and total suspension length.
We observe that in almost all the evaluated configurations, our \textbf{NOM-EDF} approach outperforms all the state of the arts.
The reason is that in order to eliminate timing anomalies for scheduling sporadic tasks, the existing methods over-approximate the WCRTs of tasks, which leads to overly pessimistic results.
The only exception appears in Figure~\ref{fig:sched_results_partial}~(e), where \textbf{EDAGMF-OPA} has the highest acceptance ratio among all algorithms when the total utilization reaches $95\%$ for task sets with long suspension intervals and only two segments.
We conclude that under certain configurations, priority assignment approaches such as \textbf{EDAGMF-OPA} can significantly improve the performance, i.e., schedulability, of fixed-priority scheduling.
Still, our proposed treatments remain high acceptance ratios under all the other configurations.

Although we achieved high acceptance ratios with \textbf{NOM-RM} and \textbf{NOM-EDF} in almost all the evaluated configurations,
we would like to point out that the proposed treatments do not bind to any specific scheduling algorithms for generating a nominal schedule.
Given a feasible nominal schedule of a segmented self-suspension periodic task set generated by any segment-level fixed-priority preemptive scheduling algorithm, \emph{segment release time enforcement} and \emph{segment priority modification} eliminate timing anomalies and guarantee the schedulability, as proven in Section~\ref{sec:treatments}.
Therefore, we propose a new approach \textbf{COMB-ALL}, which applies several scheduling algorithms to a task set, and returns a nominal schedule if the task set is feasible by any of these algorithms.
In our current design, we consider \textbf{NOM-EDF}, \textbf{NOM-RM}, and \textbf{EDAGMF-OPA} in \textbf{COMB-ALL} since these approaches in general outperformed the others in Figure~\ref{fig:sched_results_partial}.
Figure~\ref{fig:sched_results_combine} demonstrates the acceptance ratios of \textbf{COMB-ALL} and those of the three approaches individually under the same configuration in Figure~\ref{fig:sched_results_partial}~(e).
We observe that \textbf{COMB-ALL} has the highest acceptance ratio among the anomaly-free approaches.

\subsection{Impact of Release Jitter on Schedulability}\label{sec:eval_jitter}

\begin{figure}[tbp]
	\centering	
	\includegraphics[width=\linewidth]{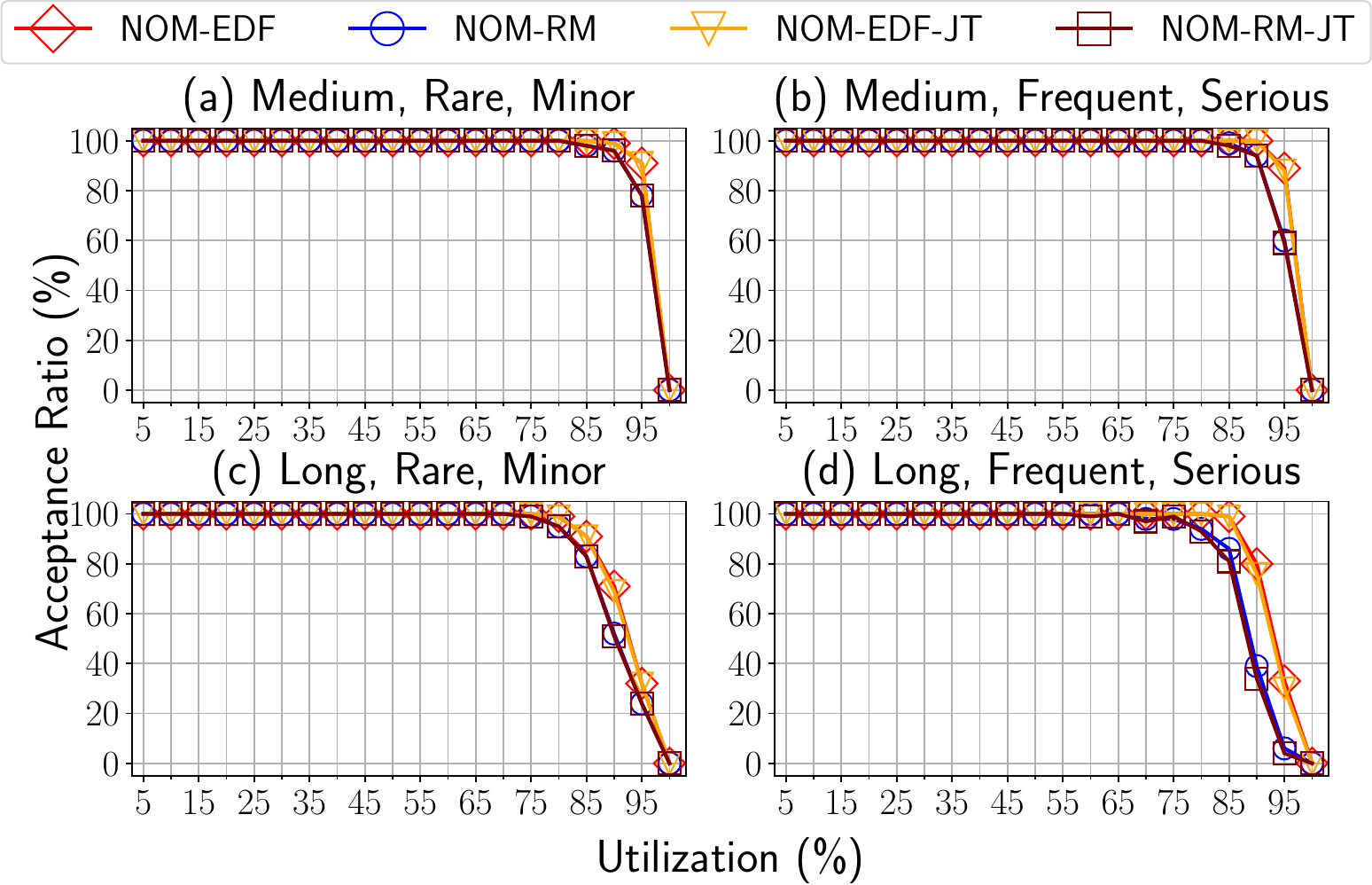}
	\caption{Acceptance ratios under different configurations with and without considering release jitter.}
	\label{fig:sched_results_jitter}	
\end{figure}

In this section, we assess the impact of release jitter on the acceptance ratio of nominal schedules.
Recall that the proposed treatments for eliminating timing anomalies rely on details provided in the nominal schedule.
Generating a nominal schedule without accounting for release jitter assumes that the first segment of every job is released simultaneously with the job itself.
Compared to the nominal schedules that neglect release jitter, the nominal schedules considering release jitter are expected to have lower acceptance ratios, depending on the severity of the release jitter.

Figure~\ref{fig:sched_results_jitter} demonstrates the acceptance ratios of the approaches under different configurations.
\emph{NOM-EDF} and \emph{NOM-RM} represent the approaches without considering release jitter, while \emph{NOM-EDF-JT} and \emph{NOM-RM-JT} incorporate the maximum release jitter of tasks during nominal schedule generation.
We observe that the acceptance ratios of \emph{NOM-EDF-JT} and \emph{NOM-RM-JT} are lower than those of \emph{NOM-EDF} and \emph{NOM-RM} only when the utilization is high.
The difference becomes more significant when the release jitter is more severe.

Although considering the maximum release jitter of tasks may marginally reduce the acceptance ratio, the proposed treatments guarantee that there are no timing anomalies in the online schedule.
On the other hand, disregarding the release jitter may yield overly optimistic results, potentially resulting in timing anomalies in the online schedule, as depicted in Figure~\ref{fig:release_jitter_anomaly}.

\section{Implementation on RTEMS}
\label{sec:implementation}

We implemented a Segment-Level Fixed-Priority (S-FP) scheduling mechanism on RTEMS, an open-source RTOS, to demonstrate the applicability of the treatment \emph{segment priority modification}.
In Section~\ref{sec:impl_details}, we introduce the key APIs for implementing the S-FP scheduling.
We then showcase the validity of the treatment on RTEMS with a working example in Section~\ref{sec:impl_working_example}.

\subsection{Implementation of S-FP Scheduling}\label{sec:impl_details}

\begin{figure}[tb]
	\centering	
	\newcommand{\changeprio}[1]{\draw[->, red, thick] (#1,1.8) -- (#1,1);}
	\begin{subfigure}{.33\linewidth}
		\centering
		\begin{tikzpicture}[yscale=0.4, xscale=0.225]
			\begin{scope}[shift={(0,2)}] 
				\taskname{$\tau_1$}
				\timeline{0}{10}{}
				\releases{0}
				
				\execname{0}{2}{$1$}
				\changeprio{2}
				\execname{5}{6}{$3$}
				\susp{6}{9}
			\end{scope}
			
			\begin{scope}[shift={(0,0)}] 
				\taskname{$\tau_2$}
				\timeline{0}{10}{}
				\labelling{0}{9}{2}{0}
				\releases{0}
				
				\execname{2}{5}{$2$}
			\end{scope}
		\end{tikzpicture}
		\caption{Before}
		\label{fig:slfp_priority_a}
	\end{subfigure}%
	\hfill
	\begin{subfigure}{.33\linewidth}
		\centering
		\begin{tikzpicture}[yscale=0.4, xscale=0.225]
			\begin{scope}[shift={(0,2)}] 
				\taskname{$\tau_1$}
				\timeline{0}{10}{}
				\releases{0}
				
				\execname{0}{3}{$1$}
				\susp{3}{5}
				\execname{5}{6}{$1$}
				\changeprio{6}
				\execname{7}{9}{3}
			\end{scope}
			
			\begin{scope}[shift={(0,0)}] 
				\taskname{$\tau_2$}
				\timeline{0}{10}{}
				\labelling{0}{9}{2}{0}
				\releases{0}
				
				\execname{3}{5}{$2$}
				\execname{6}{7}{$2$}	
			\end{scope}
			\draw[->, dashed, red] (5,1) -- (5,2);
		\end{tikzpicture}

		\caption{After}
		\label{fig:slfp_priority_b}
	\end{subfigure}%
	\hfill
	\begin{subfigure}{.33\linewidth}
		\centering
		\begin{tikzpicture}[yscale=0.4, xscale=0.225]
			\begin{scope}[shift={(0,2)}] 
				\taskname{$\tau_1$}
				\timeline{0}{10}{}
				\releases{0}
				
				\execname{0}{3}{$1$}
				\susp{3}{5}
				\execname{6}{9}{$3$}
			\end{scope}
			
			\begin{scope}[shift={(0,0)}] 
				\taskname{$\tau_2$}
				\timeline{0}{10}{}
				\labelling{0}{9}{2}{0}
				\releases{0}
				
				\execname{3}{6}{$2$}
				\changeprio{4.5}
			\end{scope}
			\draw[->, dashed, red] (5,1) -- (5,2);
		\end{tikzpicture}

		\caption{During}
		\label{fig:slfp_priority_c}
	\end{subfigure}
	\caption{Impact of the time to perform priority modification. The red arrow indicates the time point that the priority is modified. Only option (c) leads to the desired behavior.}
	\label{fig:slfp_priority}	
\end{figure}

\begin{figure*}[tb]
	\centering	
	\begin{subfigure}{.33\linewidth}
		\centering
		\begin{tikzpicture}[yscale=0.4, xscale=0.225]
			\begin{scope}[shift={(0,2)}] 
				\taskname{$\tau_1$}
				\timeline{0}{14}{}
				\releases{0,12}
				
				\execname{0}{3}{1}
				\susp{3}{8}
				\execname{8}{11}{1}
			\end{scope}
			
			\begin{scope}[shift={(0,0)}] 
				\taskname{$\tau_2$}
				\timeline{0}{14}{}				
				\releases{0,6,12}
				
				\execname{3}{4}{2}
				\execname[fill=cyan]{6}{7}{2}
			\end{scope}
			
			\begin{scope}[shift={(0,-2)}] 
				\taskname{$\tau_{\mathit{sus}}$}
				\timeline{0}{14}{}
				\labelling{0}{12}{2}{0}
				\releases{0,12}
				
				\execname{4}{6}{}
				\execname{7}{8}{}
			\end{scope}
			\draw[->, dashed, red] (8,-1) -- (8,2);
		\end{tikzpicture}
		\caption{T-FP WCET}
	\end{subfigure}%
	\begin{subfigure}{.33\linewidth}
		\centering
		\begin{tikzpicture}[yscale=0.4, xscale=0.225]
			\begin{scope}[shift={(0,2)}] 
				\taskname{$\tau_1$}
				\timeline{0}{14}{}
				\releases{0,12}
				
				\execname{0}{1}{1}
				\susp{1}{5}
				\execname{5}{8}{1}
			\end{scope}
			
			\begin{scope}[shift={(0,0)}] 
				\taskname{$\tau_2$}
				\timeline{0}{14}{}
				\releases{0,6,12}
				
				\execname{1}{2}{2}
				\execname[fill=cyan]{8}{9}{2}
			\end{scope}
			\begin{scope}[shift={(0,-2)}] 
				\taskname{$\tau_{\mathit{sus}}$}
				\timeline{0}{14}{}
				\labelling{0}{12}{2}{0}
				\releases{0,12}
				
				\execname{2}{5}{}		
			\end{scope}
			\draw[->, dashed, red] (5,-1) -- (5,2);
		\end{tikzpicture}
		\caption{T-FP with early completion}
	\end{subfigure}%
	\begin{subfigure}{.33\linewidth}
		\centering
		\begin{tikzpicture}[yscale=0.4, xscale=0.225]
			\begin{scope}[shift={(0,2)}] 
				\taskname{$\tau_1$}
				\timeline{0}{14}{}
				\releases{0,12}
				
				\execname{0}{1}{1}
				\susp{1}{5}
				\execname{5}{6}{4}
				\execname{7}{9}{4}
			\end{scope}
			
			\begin{scope}[shift={(0,0)}] 
				\taskname{$\tau_2$}
				\timeline{0}{14}{}
				\releases{0,6,12}
				
				\execname{1}{2}{2}	
				\execname[fill=cyan]{6}{7}{3}
			\end{scope}
			\begin{scope}[shift={(0,-2)}] 
				\taskname{$\tau_{\mathit{sus}}$}
				\timeline{0}{14}{}
				\labelling{0}{12}{2}{0}
				\releases{0,12}
				
				\execname{2}{5}{}
			\end{scope}
			\draw[->, dashed, red] (5,-1) -- (5,2);
		\end{tikzpicture}
		\caption{S-FP with early completion}
	\end{subfigure}	
	\caption{The working example implemented in RTEMS.}
	\label{fig:validation_cases_RTEMS}
\end{figure*}
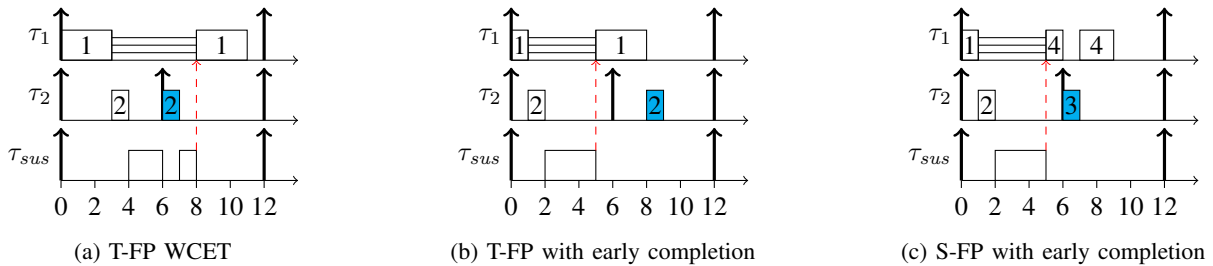

There are three major functionalities to be taken into consideration while designing the Segment-Level Fixed-Priority (S-FP) scheduling mechanism on RTEMS:
1) self-suspension of a task, 2) resuming a self-suspended task, and 3) modifying task priority.
In our current design, we introduce a middle layer which wraps all the required functions provided by RTEMS at the API layer without modifying the underlying kernel.
To perform self-suspension, the current executing task calls the function \texttt{rtems\_task\_suspend()} with its own task id \texttt{RTEMS\_SELF} at the end of a segment execution, except for the last segment.
Since a task cannot resume itself after suspension, the resume of a suspended task must be triggered by other sources, e.g., another task.

In the S-FP scheduling, segments from the same task are allowed to have different priorities.
However, RTEMS only supports task-level priority assignment in the current version.
Therefore, we adjusted the priority for each segment by calling the function \texttt{rtems\_task\_set\_priority()}.

There are three possible moments for modifying the priority: \textit{Before}, \textit{After}, and \textit{During} the suspension of the previous segment, as shown in Figure~\ref{fig:slfp_priority}.
If the priority is modified \textit{before} the suspension starts, i.e., during the execution of the previous segment, the remaining execution of the previous segment can be preempted by another job (Figure~\ref{fig:slfp_priority}~(a)).
Alternatively, if the priority is modified \textit{after} the segment starts, it can incur unexpected preemptions (Figure~\ref{fig:slfp_priority}~(b)).
This is due to the gap between calling the function \texttt{rtems\_task\_set\_priority()} and the priority modification is issued.
Therefore, the ideal solution is to perform priority modification \textit{during} the suspension, as shown in Figure~\ref{fig:slfp_priority}~(c).

Considering the priority modification during suspension, we introduce a customized resume mechanism.
Given the priority of each segment as inputs, we keep a lookup table in the middle layer.
Every time before a suspended task is about to be resumed, the controlling task first calls the function \texttt{rtems\_task\_set\_priority()} to assign the new priority to the task according to the lookup table, then it calls \texttt{rtems\_task\_resume()} with the id of the task to be resumed.

\subsection{Working Example}\label{sec:impl_working_example}

We validated the proposed treatment, \emph{segment priority modification}, on RTEMS with our S-FP implementation using the following example.
Given a taskset $\Tbb=\setof{\tau_1, \tau_2}$ of two tasks to be scheduled on a uniprocessor system, where $\tau_1 = (\Ex_{\tau_1}, \Rel_{\tau_1}) = ((3, 5, 3), (0, 12, 24, \dots))$, and $\tau_2 = (\Ex_{\tau_2}, \Rel_{\tau_2}) = ((1), (0, 6, 12, \dots))$.
Since a task cannot resume itself after suspension, we add one additional lowest priority task $\tau_{\mathit{sus}} = (\Ex_{\tau_3}, \Rel_{\tau_3}) = ((3), (0, 12, 24, \dots))$ which resumes $\tau_1$ when it finishes execution.
We consider three scenarios: (a) \textbf{T-FP WCET}, (b) \textbf{T-FP with early completion}, and (c) \textbf{S-FP with early completion}.
In \textbf{T-FP WCET} and \textbf{T-FP with early completion}, 
the tasks are scheduled using task-level fixed priority scheduling.
Each segment is executed to its WCET, and then suspended for the maximum length of the suspension interval in \textbf{T-FP WCET}.
On the other hand, the segments and the suspension interval can finish earlier in \textbf{T-FP with early completion}.
In this scenario, we decrease the execution time of the first segment in $\tau_1$.
In \textbf{S-FP with early completion}, the tasks are scheduled using task-level fixed priority scheduling.
The priorities of the segments follow the treatment \emph{segment priority modification}, which use the schedule generated by task-level fixed priority scheduling as the nominal schedule.

Figure~\ref{fig:validation_cases_RTEMS} demonstrates the schedules generated in the three scenarios.
The numbers in the segments are their priorities.
A lower number indicates a higher priority.
We observe that in Figure~\ref{fig:validation_cases_RTEMS}~(b), the first segment of $\tau_1$ finishes earlier, which delays the second job of $\tau_2$.
With priorities generated from the treatment \emph{segment priority modification}, the second job of $\tau_2$ is not affected, i.e., no timing anomalies occur.

\section{Conclusion and Future Work}\label{sec:conclusion}

For tasks with self-suspending behavior, providing timing guarantees is challenging due to timing anomalies, i.e., the reduction of execution or suspension time of some jobs enlarges the response time of another job.
In this paper, we propose two treatments, \emph{segment release time enforcement} and \emph{segment priority modification}, for scheduling segmented self-suspension periodic tasks without any risk of timing anomalies.
Given a nominal schedule generated based on the WCET and the maximum suspension time of segments, \emph{segment release time enforcement} eliminates timing anomalies by enforcing the release time of each segment to be no earlier than its nominal release time.
On the other hand, \emph{segment priority modification} maintains the total order of the segments in the nominal schedule to prevent timing anomalies.

In our evaluation, we compared the proposed treatments to state-of-the-art scheduling algorithms in terms of schedulability.
The results on synthetic task sets show that our proposed treatments achieve the highest acceptance ratio under almost all scenarios compared to the state of the art.
For the scenarios involving release jitter, the schedulability of the proposed treatments does not degrade significantly compared to ignoring the release jitter.
We also depict how to realize the segment-level fixed-priority scheduling mechanism on RTEMS, an open-source RTOS, and showcase the validity of the treatment \emph{segment priority modification} with an example.

\section*{Acknowledgment}
This result is part of a project (PropRT) that has received funding from the European Research Council (ERC) under the European Union’s Horizon 2020 research and innovation programme (grant agreement No. 865170).
This paper has been (partly) supported by Deutsche Forschungsgemeinschaft (DFG) Sus-Aware (Project No. 398602212), the Federal Ministry of Education and Research (BMBF) in the course of the project 6GEM under the funding reference 16KISK038.

\IEEEtriggeratref{45}
\bibliographystyle{elsarticle-num}
\bibliography{real-time}

\begin{thebibliography}{10}
\expandafter\ifx\csname url\endcsname\relax
  \def\url#1{\texttt{#1}}\fi
\expandafter\ifx\csname urlprefix\endcsname\relax\def\urlprefix{URL }\fi
\expandafter\ifx\csname href\endcsname\relax
  \def\href#1#2{#2} \def\path#1{#1}\fi

\bibitem{DBLP:conf/rtss/RajkumarSL88}
R.~Rajkumar, L.~Sha, J.~P. Lehoczky, Real-time synchronization protocols for
  multiprocessors, in: Proceedings of the 9th {IEEE} Real-Time Systems
  Symposium {(RTSS} '88), 1988, pp. 259--269.

\bibitem{Kang:rtss07}
W.~Kang, S.~Son, J.~Stankovic, M.~Amirijoo, {I/O-Aware Deadline Miss Ratio
  Management in Real-Time Embedded Databases}, in: Proc. of the 28th IEEE
  Real-Time Systems Symp., 2007, pp. 277--287.

\bibitem{Kato_2011}
S.~Kato, K.~Lakshmanan, A.~Kumar, M.~Kelkar, Y.~Ishikawa, R.~Rajkumar,
  \href{http://dx.doi.org/10.1109/rtss.2011.13}{{{RGEM}: A Responsive {GPGPU}
  Execution Model for Runtime Engines}}, in: 2011 {IEEE} 32nd Real-Time Systems
  Symposium, 2011, pp. 57--66.
\newblock \href {https://doi.org/10.1109/rtss.2011.13}
  {\path{doi:10.1109/rtss.2011.13}}.
\newline\urlprefix\url{http://dx.doi.org/10.1109/rtss.2011.13}

\bibitem{DBLP:books/sp/22/Brandenburg22}
B.~B. Brandenburg,
  \href{https://doi.org/10.1007/978-981-287-251-7\_10}{Multiprocessor real-time
  locking protocols}, in: Y.~Tian, D.~C. Levy (Eds.), Handbook of Real-Time
  Computing, Springer, 2022, pp. 347--446.
\newblock \href {https://doi.org/10.1007/978-981-287-251-7\_10}
  {\path{doi:10.1007/978-981-287-251-7\_10}}.
\newline\urlprefix\url{https://doi.org/10.1007/978-981-287-251-7\_10}

\bibitem{nimmagadda2010real}
Y.~Nimmagadda, K.~Kumar, Y.-H. Lu, C.~G. Lee, Real-time moving object
  recognition and tracking using computation offloading, in: Intelligent Robots
  and Systems (IROS), 2010 IEEE/RSJ International Conference on, IEEE, 2010,
  pp. 2449--2455.

\bibitem{DBLP:conf/ecrts/TomaC13}
A.~Toma, J.-J. Chen, Computation offloading for frame-based real-time tasks
  with resource reservation servers, in: 2013 25th Euromicro Conference on
  Real-Time Systems, 2013, pp. 103--112.
\newblock \href {https://doi.org/10.1109/ECRTS.2013.21}
  {\path{doi:10.1109/ECRTS.2013.21}}.

\bibitem{DBLP:conf/dac/LiuCTKD14}
W.~Liu, J.-J. Chen, A.~Toma, T.-W. Kuo, Q.~Deng, Computation offloading by
  using timing unreliable components in real-time systems, in: Design
  Automation Conference ({DAC}), Vol. 39:1 -- 39:6, 2014.
\newblock \href {https://doi.org/10.1145/2593069.2593109}
  {\path{doi:10.1145/2593069.2593109}}.

\bibitem{DBLP:conf/sies/FonsecaNNP16}
J.~C. Fonseca, G.~Nelissen, V.~N{\'{e}}lis, L.~M. Pinho,
  \href{http://dx.doi.org/10.1109/SIES.2016.7509443}{Response time analysis of
  sporadic {DAG} tasks under partitioned scheduling}, in: 11th {IEEE} Symposium
  on Industrial Embedded Systems, {SIES}, 2016, pp. 290--299.
\newblock \href {https://doi.org/10.1109/SIES.2016.7509443}
  {\path{doi:10.1109/SIES.2016.7509443}}.
\newline\urlprefix\url{http://dx.doi.org/10.1109/SIES.2016.7509443}

\bibitem{DBLP:conf/ecrts/UeterGBC21}
N.~Ueter, M.~G{\"{u}}nzel, G.~von~der Br{\"{u}}ggen, J.-J. Chen,
  \href{https://doi.org/10.4230/LIPIcs.ECRTS.2021.10}{Hard real-time stationary
  gang-scheduling}, in: 33rd Euromicro Conference on Real-Time Systems,
  {ECRTS}, 2021, pp. 10:1--10:19.
\newblock \href {https://doi.org/10.4230/LIPIcs.ECRTS.2021.10}
  {\path{doi:10.4230/LIPIcs.ECRTS.2021.10}}.
\newline\urlprefix\url{https://doi.org/10.4230/LIPIcs.ECRTS.2021.10}

\bibitem{DBLP:conf/dac/HuangCR16}
W.-H. Huang, J.-J. Chen, J.~Reineke,
  \href{http://doi.acm.org/10.1145/2897937.2898046}{{MIRROR:} symmetric timing
  analysis for real-time tasks on multicore platforms with shared resources},
  in: Design Automation Conference, {DAC}, 2016, pp. 158:1--158:6.
\newblock \href {https://doi.org/10.1145/2897937.2898046}
  {\path{doi:10.1145/2897937.2898046}}.
\newline\urlprefix\url{http://doi.acm.org/10.1145/2897937.2898046}

\bibitem{LiuChen:rtss2014}
C.~Liu, J.-J. Chen, Bursty-interference analysis techniques for analyzing
  complex real-time task models, in: Real-Time Systems Symposium (RTSS), 2014,
  pp. 173--183.

\bibitem{DBLP:conf/ecrts/ChenGJBC22}
K.-H. Chen, M.~G{\"{u}}nzel, B.~Jablkowski, M.~Buschhoff, J.-J. Chen,
  \href{https://doi.org/10.4230/LIPIcs.ECRTS.2022.6}{Unikernel-based real-time
  virtualization under deferrable servers: Analysis and realization}, in: 34th
  Euromicro Conference on Real-Time Systems, {ECRTS}, 2022, pp. 6:1--6:22.
\newblock \href {https://doi.org/10.4230/LIPIcs.ECRTS.2022.6}
  {\path{doi:10.4230/LIPIcs.ECRTS.2022.6}}.
\newline\urlprefix\url{https://doi.org/10.4230/LIPIcs.ECRTS.2022.6}

\bibitem{DBLP:conf/rtss/BiondiBPRMB16}
A.~Biondi, A.~Balsini, M.~Pagani, E.~Rossi, M.~Marinoni, G.~C. Buttazzo,
  \href{https://doi.org/10.1109/RTSS.2016.010}{A framework for supporting
  real-time applications on dynamic reconfigurable fpgas}, in: {IEEE} Real-Time
  Systems Symposium, {RTSS}, 2016, pp. 1--12.
\newblock \href {https://doi.org/10.1109/RTSS.2016.010}
  {\path{doi:10.1109/RTSS.2016.010}}.
\newline\urlprefix\url{https://doi.org/10.1109/RTSS.2016.010}

\bibitem{DBLP:conf/rtcsa/UeterCBVM20}
N.~Ueter, J.-J. Chen, G.~von~der Br{\"{u}}ggen, V.~Venkataramani, T.~Mitra,
  \href{https://doi.org/10.1109/RTCSA50079.2020.9203678}{Simultaneous
  progressing switching protocols for timing predictable real-time
  network-on-chips}, in: 26th {IEEE} International Conference on Embedded and
  Real-Time Computing Systems and Applications, {RTCSA}, 2020, pp. 1--10.
\newblock \href {https://doi.org/10.1109/RTCSA50079.2020.9203678}
  {\path{doi:10.1109/RTCSA50079.2020.9203678}}.
\newline\urlprefix\url{https://doi.org/10.1109/RTCSA50079.2020.9203678}

\bibitem{suspension-review-jj}
J.-J. Chen, G.~Nelissen, W.-H. Huang, M.~Yang, B.~B. Brandenburg, K.~Bletsas,
  C.~Liu, P.~Richard, F.~Ridouard, N.~C. Audsley, R.~Rajkumar, D.~de~Niz,
  G.~von~der Br{\"{u}}ggen,
  \href{https://doi.org/10.1007/s11241-018-9316-9}{Many suspensions, many
  problems: a review of self-suspending tasks in real-time systems}, Real Time
  Syst. 55~(1) (2019) 144--207.
\newblock \href {https://doi.org/10.1007/s11241-018-9316-9}
  {\path{doi:10.1007/s11241-018-9316-9}}.
\newline\urlprefix\url{https://doi.org/10.1007/s11241-018-9316-9}

\bibitem{GuenzelC20_RTSJournal}
M.~G{\"{u}}nzel, J.-J. Chen,
  \href{https://link.springer.com/article/10.1007/s11241-020-09353-0}{Correspondence
  article: Counterexample for suspension-aware schedulability analysis of {EDF}
  scheduling}, Real Time Systems Journal 56~(4) (2020) 490--493.
\newline\urlprefix\url{https://link.springer.com/article/10.1007/s11241-020-09353-0}

\bibitem{GuenzelC21_RTSJournal}
M.~G{\"{u}}nzel, J.-J. Chen,
  \href{https://link.springer.com/article/10.1007/s11241-020-09362-z}{A note on
  slack enforcement mechanisms for self-suspending tasks}, Real Time Systems
  Journal 57~(4) (2021) 387--396.
\newline\urlprefix\url{https://link.springer.com/article/10.1007/s11241-020-09362-z}

\bibitem{DBLP:conf/rtcsa/SchonbergerHBCC18}
L.~Sch{\"{o}}nberger, W.~Huang, G.~von~der Br{\"{u}}ggen, K.-H. Chen, J.-J.
  Chen, \href{https://doi.org/10.1109/RTCSA.2018.00027}{Schedulability analysis
  and priority assignment for segmented self-suspending tasks}, in: 24th {IEEE}
  International Conference on Embedded and Real-Time Computing Systems and
  Applications, {RTCSA}, 2018, pp. 157--167.
\newblock \href {https://doi.org/10.1109/RTCSA.2018.00027}
  {\path{doi:10.1109/RTCSA.2018.00027}}.
\newline\urlprefix\url{https://doi.org/10.1109/RTCSA.2018.00027}

\bibitem{RTCSA-BletsasA05}
K.~Bletsas, N.~C. Audsley,
  \href{http://doi.ieeecomputersociety.org/10.1109/RTCSA.2005.48}{Extended
  analysis with reduced pessimism for systems with limited parallelism}, in:
  11th {IEEE} International Conference on Embedded and Real-Time Computing
  Systems and Applications (RTCSA), 2005, pp. 525--531.
\newblock \href {https://doi.org/10.1109/RTCSA.2005.48}
  {\path{doi:10.1109/RTCSA.2005.48}}.
\newline\urlprefix\url{http://doi.ieeecomputersociety.org/10.1109/RTCSA.2005.48}

\bibitem{RTSS-ChenL14}
J.-J. Chen, C.~Liu, Fixed-relative-deadline scheduling of hard real-time tasks
  with self-suspensions, in: Real-Time Systems Symposium (RTSS), 2014, pp.
  149--160.
\newblock \href {https://doi.org/10.1109/RTSS.2014.31}
  {\path{doi:10.1109/RTSS.2014.31}}.

\bibitem{DBLP:conf/rtcsa/PengF16}
B.~Peng, N.~Fisher, \href{https://doi.org/10.1109/RTCSA.2016.15}{Parameter
  adaption for generalized multiframe tasks and applications to self-suspending
  tasks}, in: 22nd {IEEE} International Conference on Embedded and Real-Time
  Computing Systems and Applications, {RTCSA}, {IEEE} Computer Society, 2016,
  pp. 49--58.
\newblock \href {https://doi.org/10.1109/RTCSA.2016.15}
  {\path{doi:10.1109/RTCSA.2016.15}}.
\newline\urlprefix\url{https://doi.org/10.1109/RTCSA.2016.15}

\bibitem{WC16-suspend-DATE}
W.-H. Huang, J.-J. Chen, Self-suspension real-time tasks under
  fixed-relative-deadline fixed-priority scheduling, in: Design, Automation,
  and Test in Europe (DATE), 2016, pp. 1078--1083.

\bibitem{Kim2016}
J.~Kim, B.~Andersson, D.~de~Niz, J.-J. Chen, W.-H. Huang, G.~Nelissen,
  Segment-fixed priority scheduling for self-suspending real-time tasks, Tech.
  Rep. CMU/SEI-2016-TR-002, CMU/SEI (2016).

\bibitem{ecrts15nelissen}
G.~Nelissen, J.~Fonseca, G.~Raravi, V.~N\'elis, {Timing Analysis of Fixed
  Priority Self-Suspending Sporadic Tasks}, in: Euromicro Conference on
  Real-Time Systems (ECRTS), 2015, pp. 80--89.

\bibitem{Bruggen16RTNS}
G.~von~der Br\"{u}ggen, W.-H. Huang, J.-J. Chen, C.~Liu, Uniprocessor
  scheduling strategies for self-suspending task systems, in: International
  Conference on Real-Time Networks and Systems, RTNS '16, 2016, pp. 119--128.
\newblock \href {https://doi.org/10.1145/2997465.2997497}
  {\path{doi:10.1145/2997465.2997497}}.

\bibitem{DBLP:conf/ecrts/ChenHHMB19}
J.-J. Chen, T.~Hahn, R.~Hoeksma, N.~Megow, G.~von~der Br{\"{u}}ggen,
  \href{https://doi.org/10.4230/LIPIcs.ECRTS.2019.16}{Scheduling
  self-suspending tasks: New and old results}, in: S.~Quinton (Ed.), 31st
  Euromicro Conference on Real-Time Systems, {ECRTS}, Vol. 133 of LIPIcs,
  Schloss Dagstuhl - Leibniz-Zentrum f{\"{u}}r Informatik, 2019, pp.
  16:1--16:23.
\newblock \href {https://doi.org/10.4230/LIPIcs.ECRTS.2019.16}
  {\path{doi:10.4230/LIPIcs.ECRTS.2019.16}}.
\newline\urlprefix\url{https://doi.org/10.4230/LIPIcs.ECRTS.2019.16}

\bibitem{DBLP:journals/rts/AkessonNNAD22}
B.~Akesson, M.~Nasri, G.~Nelissen, S.~Altmeyer, R.~I. Davis,
  \href{https://doi.org/10.1007/s11241-021-09376-1}{A comprehensive survey of
  industry practice in real-time systems}, Real Time Syst. 58~(3) (2022)
  358--398.
\newblock \href {https://doi.org/10.1007/s11241-021-09376-1}
  {\path{doi:10.1007/s11241-021-09376-1}}.
\newline\urlprefix\url{https://doi.org/10.1007/s11241-021-09376-1}

\bibitem{DBLP:conf/date/YalcinkayaNB19}
B.~Yalcinkaya, M.~Nasri, B.~B. Brandenburg,
  \href{https://doi.org/10.23919/DATE.2019.8715111}{An exact schedulability
  test for non-preemptive self-suspending real-time tasks}, in: Design,
  Automation {\&} Test in Europe Conference {\&} Exhibition, {DATE}, 2019, pp.
  1228--1233.
\newblock \href {https://doi.org/10.23919/DATE.2019.8715111}
  {\path{doi:10.23919/DATE.2019.8715111}}.
\newline\urlprefix\url{https://doi.org/10.23919/DATE.2019.8715111}

\bibitem{guenzel2020sched_test_edf}
M.~{G\"unzel}, G.~{von der Br\"uggen}, J.-J. {Chen}, Suspension-aware
  earliest-deadline-first scheduling analysis, IEEE Transactions on
  Computer-Aided Design of Integrated Circuits and Systems 39~(11) (2020)
  4205--4216.
\newblock \href {https://doi.org/10.1109/TCAD.2020.3013095}
  {\path{doi:10.1109/TCAD.2020.3013095}}.

\bibitem{DBLP:conf/rtas/LinGSSCC23}
C.~Lin, M.~G{\"{u}}nzel, J.~Shi, T.~T. Seidl, K.~Chen, J.~Chen,
  \href{https://doi.org/10.1109/RTAS58335.2023.00020}{Scheduling periodic
  segmented self-suspending tasks without timing anomalies}, in: 29th {IEEE}
  Real-Time and Embedded Technology and Applications Symposium, {RTAS} 2023,
  San Antonio, TX, USA, May 9-12, 2023, {IEEE}, 2023, pp. 161--173.
\newblock \href {https://doi.org/10.1109/RTAS58335.2023.00020}
  {\path{doi:10.1109/RTAS58335.2023.00020}}.
\newline\urlprefix\url{https://doi.org/10.1109/RTAS58335.2023.00020}

\bibitem{rtems}
{RTEMs}, \url{http://www.rtems.org/}.

\bibitem{ECRTS-AudsleyB04}
N.~C. Audsley, K.~Bletsas,
  \href{http://doi.ieeecomputersociety.org/10.1109/ECRTS.2004.12}{Fixed
  priority timing analysis of real-time systems with limited parallelism}, in:
  16th Euromicro Conference on Real-Time Systems {(ECRTS)}, 2004, pp. 231--238.
\newblock \href {https://doi.org/10.1109/ECRTS.2004.12}
  {\path{doi:10.1109/ECRTS.2004.12}}.
\newline\urlprefix\url{http://doi.ieeecomputersociety.org/10.1109/ECRTS.2004.12}

\bibitem{huangpass:dac2015}
W.-H. Huang, J.-J. Chen, H.~Zhou, C.~Liu, {PASS}: Priority assignment of
  real-time tasks with dynamic suspending behavior under fixed-priority
  scheduling, in: Proceedings of the 52nd Annual Design Automation Conference
  (DAC), 2015, pp. 154:1--154:6.
\newblock \href {https://doi.org/10.1145/2744769.2744891}
  {\path{doi:10.1145/2744769.2744891}}.

\bibitem{ChenECRTS2016-suspension}
J.-J. Chen, G.~Nelissen, W.-H. Huang,
  \href{https://doi.org/10.1109/ECRTS.2016.31}{A unifying response time
  analysis framework for dynamic self-suspending tasks}, in: Euromicro
  Conference on Real-Time Systems (ECRTS), 2016, pp. 61--71.
\newblock \href {https://doi.org/10.1109/ECRTS.2016.31}
  {\path{doi:10.1109/ECRTS.2016.31}}.
\newline\urlprefix\url{https://doi.org/10.1109/ECRTS.2016.31}

\bibitem{DBLP:conf/ecrts/Devi03}
U.~C. Devi, An improved schedulability test for uniprocessor periodic task
  systems, in: 15th Euromicro Conference on Real-Time Systems ({ECRTS}), 2003,
  pp. 23--32.

\bibitem{DBLP:conf/rtss/GunzelUC21}
M.~G{\"{u}}nzel, N.~Ueter, J.-J. Chen,
  \href{https://doi.org/10.1109/RTSS52674.2021.00045}{Suspension-aware
  fixed-priority schedulability test with arbitrary deadlines and arrival
  curves}, in: 42nd {IEEE} Real-Time Systems Symposium, {RTSS}, 2021, pp.
  418--430.
\newblock \href {https://doi.org/10.1109/RTSS52674.2021.00045}
  {\path{doi:10.1109/RTSS52674.2021.00045}}.
\newline\urlprefix\url{https://doi.org/10.1109/RTSS52674.2021.00045}

\bibitem{DBLP:conf/ecrts/AromoloBN22}
F.~Aromolo, A.~Biondi, G.~Nelissen,
  \href{https://doi.org/10.4230/LIPIcs.ECRTS.2022.13}{Response-time analysis
  for self-suspending tasks under {EDF} scheduling}, in: 34th Euromicro
  Conference on Real-Time Systems, {ECRTS}, 2022, pp. 13:1--13:18.
\newblock \href {https://doi.org/10.4230/LIPIcs.ECRTS.2022.13}
  {\path{doi:10.4230/LIPIcs.ECRTS.2022.13}}.
\newline\urlprefix\url{https://doi.org/10.4230/LIPIcs.ECRTS.2022.13}

\bibitem{DBLP:conf/rtcsa/ChenBH017}
J.-J. Chen, G.~von~der Br{\"{u}}ggen, W.-H. Huang, C.~Liu,
  \href{https://doi.org/10.1109/RTCSA.2017.8046321}{State of the art for
  scheduling and analyzing self-suspending sporadic real-time tasks}, in: 23rd
  {IEEE} International Conference on Embedded and Real-Time Computing Systems
  and Applications, {RTCSA}, 2017, pp. 1--10.
\newblock \href {https://doi.org/10.1109/RTCSA.2017.8046321}
  {\path{doi:10.1109/RTCSA.2017.8046321}}.
\newline\urlprefix\url{https://doi.org/10.1109/RTCSA.2017.8046321}

\bibitem{lehoczky89}
J.~P. Lehoczky, L.~Sha, Y.~Ding, The rate monotonic scheduling algorithm: Exact
  characterization and average case behavior, in: IEEE Real-Time Systems
  Symposium'89, 1989, pp. 166--171.
\newblock \href {https://doi.org/10.1109/REAL.1989.63567}
  {\path{doi:10.1109/REAL.1989.63567}}.

\bibitem{DBLP:conf/rtss/Baker03}
T.~P. Baker, \href{http://dx.doi.org/10.1109/REAL.2003.1253260}{Multiprocessor
  {EDF} and deadline monotonic schedulability analysis}, in: IEEE Real-Time
  Systems Symposium, 2003, pp. 120--129.
\newblock \href {https://doi.org/10.1109/REAL.2003.1253260}
  {\path{doi:10.1109/REAL.2003.1253260}}.
\newline\urlprefix\url{http://dx.doi.org/10.1109/REAL.2003.1253260}

\bibitem{Raj:suspension1991}
R.~Rajkumar, {Dealing with Suspending Periodic Tasks.}, Tech. rep., IBM T. J.
  Watson Research Center,
  \url{http://www.cs.cmu.edu/afs/cs/project/rtmach/public/papers/period-enforcer.ps}
  (1991).

\bibitem{ChenBrandenburg-2016-LITES}
J.-J. Chen, B.~Brandenburg,
  \href{https://doi.org/10.4230/LITES-v004-i001-a001}{A note on the period
  enforcer algorithm for self-suspending tasks}, Leibniz Transactions on
  Embedded Systems (LITES) 4~(1) (2017) 01:1--01:22.
\newblock \href {https://doi.org/10.4230/LITES-v004-i001-a001}
  {\path{doi:10.4230/LITES-v004-i001-a001}}.
\newline\urlprefix\url{https://doi.org/10.4230/LITES-v004-i001-a001}

\bibitem{DBLP:conf/icdcs/SunL96}
J.~Sun, J.~W.-S. Liu, Synchronization protocols in distributed real-time
  systems, in: Proceedings of the 16th International Conference on Distributed
  Computing Systems, 1996, pp. 38--45.
\newblock \href {https://doi.org/10.1109/ICDCS.1996.507899}
  {\path{doi:10.1109/ICDCS.1996.507899}}.

\bibitem{LR:rtas10}
K.~Lakshmanan, R.~Rajkumar, Scheduling self-suspending real-time tasks with
  rate-monotonic priorities, in: Real-Time and Embedded Technology and
  Applications Symposium (RTAS), 2010, pp. 3--12.

\bibitem{DBLP:conf/rtcsa/BruggenBCDR22}
G.~von~der Br{\"{u}}ggen, A.~Burns, J.-J. Chen, R.~I. Davis, J.~Reineke,
  \href{https://doi.org/10.1109/RTCSA55878.2022.00022}{On the trade-offs
  between generalization and specialization in real-time systems}, in: 28th
  {IEEE} International Conference on Embedded and Real-Time Computing Systems
  and Applications, {RTCSA}, 2022, pp. 148--159.
\newblock \href {https://doi.org/10.1109/RTCSA55878.2022.00022}
  {\path{doi:10.1109/RTCSA55878.2022.00022}}.
\newline\urlprefix\url{https://doi.org/10.1109/RTCSA55878.2022.00022}

\bibitem{DBLP:conf/rtss/GriffinBD20}
D.~Griffin, I.~Bate, R.~I. Davis,
  \href{https://doi.org/10.1109/RTSS49844.2020.00018}{Generating utilization
  vectors for the systematic evaluation of schedulability tests}, in: 41st
  {IEEE} Real-Time Systems Symposium, {RTSS} 2020, Houston, TX, USA, December
  1-4, 2020, {IEEE}, 2020, pp. 76--88.
\newblock \href {https://doi.org/10.1109/RTSS49844.2020.00018}
  {\path{doi:10.1109/RTSS49844.2020.00018}}.
\newline\urlprefix\url{https://doi.org/10.1109/RTSS49844.2020.00018}

\bibitem{DBLP:conf/rtns/BruggenUCF17}
G.~von~der Br{\"u}ggen, N.~Ueter, J.~Chen, M.~Freier, Parametric utilization
  bounds for implicit-deadline periodic tasks in automotive systems, in:
  Proceedings of the 25th International Conference on Real-Time Networks and
  Systems, {RTNS}, 2017, pp. 108--117.

\bibitem{DBLP:conf/ecrts/HamannD0PW17}
A.~Hamann, D.~Dasari, S.~Kramer, M.~Pressler, F.~Wurst, Communication centric
  design in complex automotive embedded systems, in: Euromicro Conference on
  Real-Time Systems, {ECRTS}, 2017, pp. 10:1--10:20.

\bibitem{7509419}
S.~Tobuschat, R.~Ernst, A.~Hamann, D.~Ziegenbein, System-level timing
  feasibility test for cyber-physical automotive systems, in: 2016 11th IEEE
  Symposium on Industrial Embedded Systems (SIES), 2016, pp. 1--10.
\newblock \href {https://doi.org/10.1109/SIES.2016.7509419}
  {\path{doi:10.1109/SIES.2016.7509419}}.

\bibitem{vdBrueggen-RTCSA2017}
G.~von~der Br\"uggen, W.-H. Huang, J.-J. Chen, Hybrid self-suspension models in
  real-time embedded systems, in: International Conference on Real-Time
  Computing Systems and Applications (RTCSA), 2017.

\bibitem{DBLP:conf/rtss/GunzelTCBC21}
M.~G{\"{u}}nzel, H.~Teper, K.~Chen, G.~von~der Br{\"{u}}ggen, J.~Chen,
  \href{https://doi.org/10.1109/RTSS52674.2021.00058}{Work-in-progress:
  Evaluation framework for self-suspending schedulability tests}, in: 42nd
  {IEEE} Real-Time Systems Symposium, {RTSS} 2021, Dortmund, Germany, December
  7-10, 2021, {IEEE}, 2021, pp. 532--535.
\newblock \href {https://doi.org/10.1109/RTSS52674.2021.00058}
  {\path{doi:10.1109/RTSS52674.2021.00058}}.
\newline\urlprefix\url{https://doi.org/10.1109/RTSS52674.2021.00058}

\bibitem{Audsley01}
N.~Audsley, On priority assignment in fixed priority scheduling, Tech. rep.
  (May 2001).

\end{thebibliography}

\newpage

\appendices

\section{Behaviour Relaxation}
\label{sec:appendix_BR}

von~der~Br\"uggen~et~al.~\cite{DBLP:conf/rtcsa/BruggenBCDR22} provide
a ``\emph{systematic view of the value of special cases and the possible drawbacks of placing too much emphasis on generalization in real-time systems research}''. 
In their definition,
\begin{quote}
	\emph{Behaviour relaxation is a generalization that enables additional behaviour to be modelled. More specifically, model $A$ is a behaviour relaxation of model $B$ if the runtime behaviours of the systems described by model $A$ are strict supersets of the runtime behaviours of the corresponding systems described by model $B$.}
\end{quote}
They demonstrate that the sporadic real-time task model is a behavior relaxation against the periodic real-time task model and the dynamic self-suspension model is also a behavior relaxation against the segmented self-suspension model.

\section{Figures for the Evaluation}\label{app:figures}

\begin{figure}[tb]
	\centering	
	\includegraphics[width=0.95\textwidth]{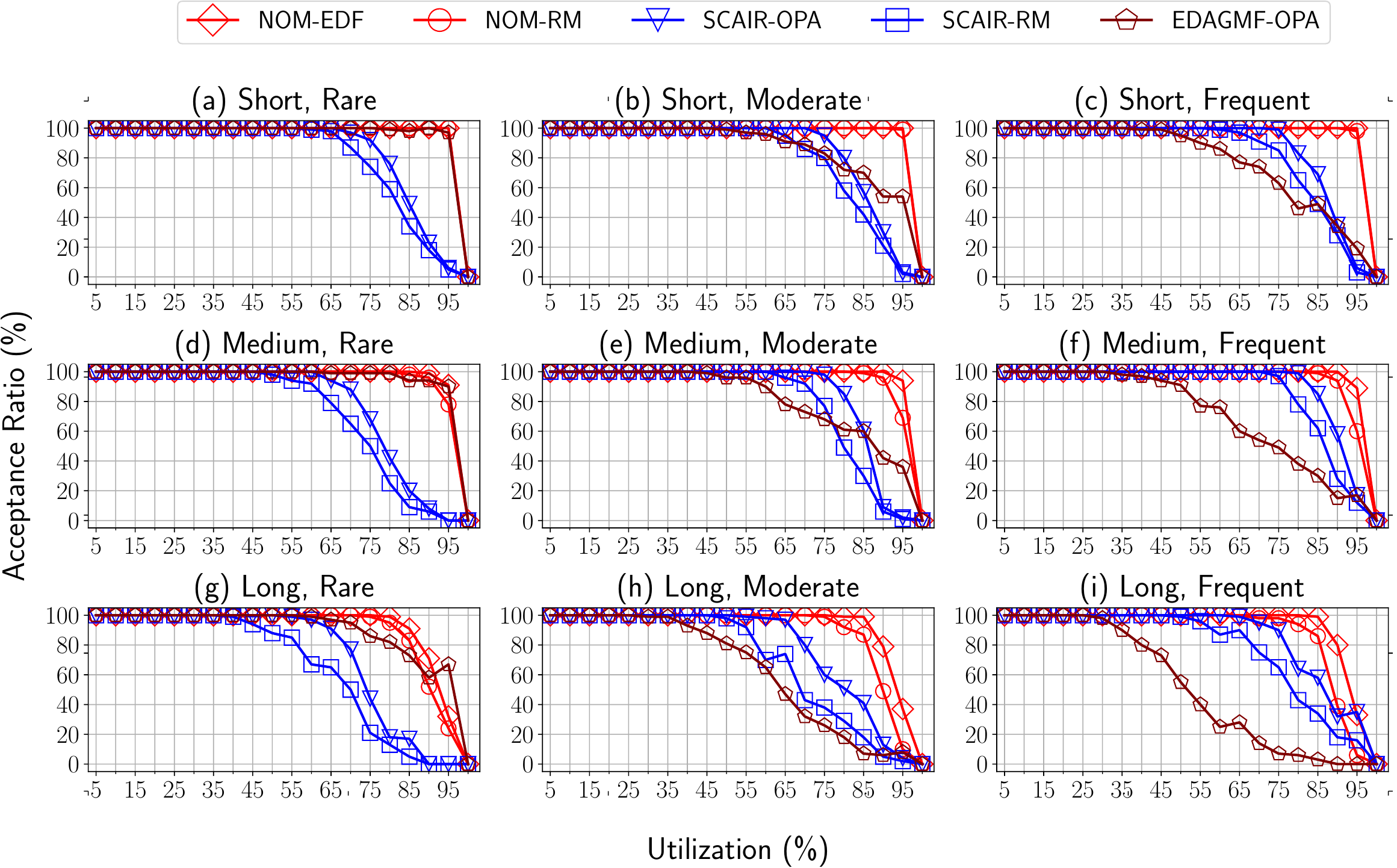}
	\caption{Acceptance ratio of the approaches under different task set configurations.}
	\label{fig:sched_results_full}	
\end{figure}

Figure~\ref{fig:sched_results_full} illustrates the acceptance ratio of different approaches across all nine task set configurations, excluding scenarios involving release jitter.

\end{document}